%% LyX 2.3.2 created this file.  For more info, see http://www.lyx.org/.
%% Do not edit unless you really know what you are doing.
\documentclass[11pt,english]{article}
\usepackage[T1]{fontenc}
\usepackage[latin9]{inputenc}
\usepackage{geometry}
\geometry{verbose,tmargin=2cm,bmargin=2.5cm,lmargin=2.5cm,rmargin=2.5cm,headsep=0.5cm,footskip=1cm}
\usepackage{units}
\usepackage{algorithm2e}
\usepackage{amsmath}
\usepackage{amsthm}
\usepackage{amssymb}
\usepackage{graphicx}
\usepackage{setspace}
\usepackage[authoryear]{natbib}
\onehalfspacing

\makeatletter
%%%%%%%%%%%%%%%%%%%%%%%%%%%%%% Textclass specific LaTeX commands.
\theoremstyle{plain}
\newtheorem{lem}{\protect\lemmaname}
\theoremstyle{plain}
\newtheorem{thm}{\protect\theoremname}
\theoremstyle{plain}
\newtheorem{cor}{\protect\corollaryname}
\theoremstyle{definition}
 \newtheorem{example}{\protect\examplename}

%%%%%%%%%%%%%%%%%%%%%%%%%%%%%% User specified LaTeX commands.
\usepackage{babel}

\makeatother

\usepackage{babel}
\providecommand{\corollaryname}{Corollary}
\providecommand{\examplename}{Example}
\providecommand{\lemmaname}{Lemma}
\providecommand{\theoremname}{Theorem}

\begin{document}
\title{Distributionally Robust Optimal Auction Design under Mean Constraints\thanks{I am grateful to Ben Brooks, Songzi Du, Vitor Farinha Luz, Jonathan
Libgober, Eran Shmaya, Alex Wolitzky, Doron Ravid and Gabriel Carroll
for their helpful comments. I am particularly thankful to Chris Ryan
and Yeon-Koo Che, for their support and generosity with their time.}}
\author{Ethan W. Che\thanks{Graduate School of Business, Columbia University.}}
\date{First Draft: 11/16/2019. Current Draft: 02/15/2022}
\maketitle

% Abstract. Note that this must come before \maketitle.
\begin{abstract}
We study a seller who sells a single good to multiple bidders with
uncertainty over the joint distribution of bidders' valuations, as
well as bidders' higher-order beliefs about their opponents. The seller
only knows the (possibly asymmetric) means of the marginal distributions
of each bidder's valuation and the range. An adversarial nature chooses
the worst-case distribution within this ambiguity set along with the
worst-case information structure. We find that a second-price auction
with a symmetric, random reserve price obtains the optimal revenue
guarantee within a broad class of mechanisms we refer to as \emph{competitive
}mechanisms, which include standard auction formats, including the
first-price auction, with or without reserve prices. The optimal mechanism
possesses two notable characteristics. First, the mechanism treats
all bidders identically even in the presence of ex-ante asymmetries.
Second, when bidders have identical means and the number of bidders $n$ grows large, the seller's optimal
reserve price converges in probability to a non-binding reserve price
and the revenue guarantee converges to the best possible revenue guarantee
at rate $O(1/n)$.

{\bf Keywords:} Robust mechanism design, second-price auction,
reserve price.

{\bf JEL codes:} D82, D44 

\end{abstract}

% Paper body
\section{Introduction}

In the standard auction theory, beginning with \citet{Myerson1981},
the seller's uncertainty over the bidders' private values is typically
modeled as a well-defined prior distribution. While this framework
is flexible, and allows for a wide range of beliefs, it does not describe
how such beliefs are formed in the first place or what happens if
these beliefs fail to approximate reality.  Indeed, the seller not
only faces uncertainty over the realizations of a probability distribution,
but also the deeper uncertainty that they may not have the correct
model at all.

How does a seller, who is concerned about model misspecification,
choose a selling mechanism? Perhaps it is unrealistic to assume that
a seller begins with a full distribution over types, which is high-dimensional
and assigns a specific probability to all possible circumstances.
It is more realistic that the seller carries low-dimensional ``summary
statistics'' or ``rules of thumb,'' such as the mean, variance,
mode, etc., on which he bases his decisions. One reasonable criterion
for choosing a mechanism, if robustness is desired, is to find the
mechanism that performs best under the worst-case scenario with respect
to all distributions that match these moments and all possible higher
order beliefs consistent with the moments.

We consider a setting in which a seller sells a single good to one
of $n$ bidders. The seller lacks knowledge of the joint distribution
of valuations, including the correlation structure between bidders'
valuations and their marginal distributions. Instead, the seller knows
the range of the bidders' valuations and the (possibly asymmetric)
means of each bidder's marginal distribution. While the results in
this paper can be extended to include knowledge of higher moments
of the distribution, we focus on the mean and range constraints, as
often in practice these are known with the highest degree of confidence 
and there tends to be greater uncertainty over higher moments.

In addition, the seller knows that the bidders' valuations are private
and bidders always play undominated strategies, but he is uncertain
about the information bidders have about their opponents' values.
In this sense, we are interested not only in finding mechanisms which
are robust to bidders' first-order valuations, but also robust to
their higher order beliefs about one another.

The objective of this paper is to find the selling mechanism that
obtains the \emph{optimal revenue guarantee} in this setting, the
highest revenue the seller is guaranteed to achieve regardless of
the joint distribution and possible information bidders have about
their opponents. To solve for the revenue guarantee, we use an equivalent
characterization of the guarantee as the equilibrium revenue of a
simultaneous-move, zero-sum game between the seller and an adversarial
nature. Nature chooses the joint distribution of valuations, consistent
with the mean and range constraints, as well as the bidders' information
structures, in order to minimize the seller's revenue given the mechanism.
The seller chooses the mechanism to maximize his revenue given the
worst-case joint distribution and beliefs. The min-max theorem implies
that the equilibrium revenue of the zero-sum game gives the optimal
revenue guarantee for the seller.

We first find the robustly optimal reserve price for the second-price
auction. We then show that the second-price auction with the optimal
reserve price achieves the optimal revenue guarantee within a wide
class of mechanisms we call \emph{competitive} mechanisms. Competitive
mechanisms comprise all mechanisms that never assign the good to any
bidder who draws a ``clearly low'' value.  Intuitively, whenever
there are two bidders who know with complete certainty among themselves that one of
them has a lower value than the other, a competitive mechanism
should never allocate the good to the bidder with the lower value
(which we denote as a ``clearly low'' value). 

This notion can be seen as a natural generalization of efficiency.
The class of efficient mechanisms is often too restrictive as it requires
that the highest value bidder must win in \emph{all} realized value
profiles under \emph{all} information structures, and there are many
common auction equilibria in which an efficient allocation does not
arise. Unlike efficient mechanisms, competitive mechanisms only forbid
allocation to low value bidders if there is \emph{zero} uncertainty
that they have a low value given common knowledge of the value distribution.
For example, if the bidders' values are drawn iid from a non-degenerate distribution
$\mu$, even if a bidder draws the lowest value in the support of
$\mu$ they may still be allocated the good by a competitive mechanism,
since his opponents cannot determine which value he drew. Since the
restriction for competitive mechanisms only concerns the behavior
of the mechanism under a particular information structure within a
small set of realized value profiles, the definition permits a wide
range of mechanisms including many standard auction formats, such
as the first-price auction. We remark that there are many mechanisms
outside of this class that are revenue-guarantee dominated by the
second-price auction with the optimal reserve price. We highlight
this class of mechanisms due to its wide scope and intuitive appeal.

There are several notable features of the optimal mechanism. First,
the mechanism is symmetric across all bidders, even if the bidders
differ in their expected valuations of the good. Second, we find that
as the number of bidders $n$ (with symmetric means) grows large,
the optimal reserve price converges in probability to a non-binding
-- effectively zero -- reserve price. The worst-case expected revenue
converges to the mean, i.e. the best possible revenue guarantee, at
rate $O(1/n)$. 

We find that the worst-case information structure is the uninformed
common-prior information structure, which gives the bidders common
knowledge of the value distribution but no other information about
their opponents (it is interesting that this is precisely the information structure textbook auction theory assumes).
The worst case value distribution randomly picks
one bidder to be the highest bidder and suppresses the values of remaining
bidders to a fixed lower bound $\alpha$. This captures the intuition that
the worst-case scenario for the seller involves suppressed competition
between the bidders. Our results show that the second-price auction
performs well even in this scenario and nature's ability to suppress
competition diminishes as the number of bidders grows large.

 It is striking that a simple standard auction format that treats
bidders symmetrically emerges as a robustly optimal, among a very
large class that may include potentially very complex mechanisms and
those that treat bidders differently, e.g., discriminating one against
another. This result can be seen as an optimality certificate for
a simple, symmetric, and easily-implementable mechanism, which requires
minimal knowledge of the true distribution. The robust optimality
of the second-price auction -- or its more practical equivalent,
the English auction -- accords well with its prevalence in practice,
a result similar in spirit to \citet{Myerson1981}. However, in contrast
to \citet{Myerson1981}, the optimal mechanism treats all bidders
identically even when the seller's a priori knowledge about the bidders
is asymmetric, and the reserve price becomes negligible as the number
of bidders increases.

This paper joins the growing literature of robust mechanism design,
beginning with \citet{scarf1957minmax}. There is much work that addresses
robust design in auctions, such as \citet{bbm2016informationally},
\citet{brooksdu2019}, and \citet{du2018commonvalue}.\footnote{See also \citet{bergemannschlag2008pricing}, \citet{wolitzky2016bilateral},
\citet{carroll2015linearcontracts,carroll2017robustseparation,carroll2019infoscore},
\citet{mu2019intertemporal}, and \citet{chenli2018revising} for
work on robust mechanism design in other environments or with alternative
specifications.} The papers closest to ours are: \citet{He2019RobustlyOptimal}, \citet{bbm2019revenueguarantee},
\citet{distributionallyRobust2019}, \citet{suzdaltsev2020}, \citet{carrasco2018optimalselling},
and \citet{brooks-du-2021}.

\citet{carrasco2018optimalselling} study the problem of a seller
faced with a \emph{single} buyer from an unknown distribution subject
to an arbitrary number of moment constraints, and solve the problem
with a known mean and range as a special case. This paper can be seen
as a generalization of this model to $n\geq2$ potential buyers. While
the method based on duality is similar, the application of the method
as a verification tool becomes more demanding. While their method
requires verifying the value distribution in a single dimensional
case, the current setup with multiple bidders requires us to conjecture
the full-fledged joint distribution with nontrivial correlation among
bidders' valuation, as will later become clear. Furthermore, our analysis,
by varying the number of bidders, allows us to study the impact of
competition. More importantly, the choice of a selling mechanism (i.e.,
the auction format) has no analogue in that paper.

\citet{suzdaltsev2020} finds the optimal mechanism among deterministic,
dominant strategy incentive compatible (DSIC) and ex post individually
rational mechanisms given the same constraints on the joint distribution
considered in this paper. He finds that the optimal mechanism is a
linear score auction, in which scores are calculated as bidder-specific
linear functions of the bids. In contrast, we consider competitive,
Bayesian mechanisms which allows us to prove optimality in comparison
to commonly-used auction rules which are not deterministic nor DSIC,
such as the first-price auction. In addition, his mechanism is only
robust to bidders' first-order valuations, whereas the featured mechanism
in this work is also robust to bidders' higher order beliefs. 

\citet{distributionallyRobust2019} studies the Stackelberg version
of our robust second-price auction problem, in which nature has a
second-mover advantage and fully knows the reserve price before choosing
the worst-case distribution subject to the same mean and range constraints.
They also find an optimal highest-bidder lottery mechanism, which
implements the same outcome as our second-price auction under nature's
worst-case distribution,\footnote{Strictly speaking, the optimal highest-bidder lottery differs from
a second-price auction with a random reserve price, as it involves
a different allocation rule. The difference becomes irrelevant at
the optimal distribution chosen by nature, as it puts zero probability
to the types where the allocation rules differ.} and they prove the optimality of this mechanism within a smaller
class of mechanisms -- ex-post incentive compatible mechanisms in
which only the highest value bidder is allocated the good -- which
does not include many common auction rules such as the first-price
auction. Also, they only consider an environment where the bidders
are symmetric, unlike in this work which allows for asymmetric mean
constraints.

\citet{He2019RobustlyOptimal} study the second-price auction within
a setting in which a seller is uncertain over the joint distribution
of valuations but \emph{knows the exact marginal distributions}. Nature
in our model can be seen as not only choosing the correlations between
bidders' valuations, but also the marginal\emph{ }distribution for
each bidder. This additional choice matters. As will be seen later,
the optimally chosen marginal distribution involves both mass points
and a smooth density, which is outside the set of marginal distributions
they consider.\footnote{They assume that the exogenously given marginal distribution $F$
admits positive density $f$ everywhere. They further require that
$xf(x)$ is weakly increasing in $x$. These conditions fail for our
optimally chosen marginal distribution.} Further, the optimal marginal distribution changes with the number
of bidders, which they do not allow for. 
Even though the seller faces greater
uncertainty in our setting, our mechanism maintains the same $O(1/n)$ rate of convergence to the best 
possible revenue guarantee as $n$ grows large. Most importantly, the auction
rule is fixed in their paper to be the second-price auction, whereas
the second-price auction is shown to be optimal within a large class
of mechanisms in the current paper.  \cite{bei-2019} studies a similar setting and find
that a sequential posted-price mechanism achieves a 4.78 approximation
to the optimal DSIC mechanism. When bidders have identical marginals they find
that asymptotically, the mechanism obtains
the optimal worst-case revenue and achieves an $n/(n-1)$ approximation among
all second-price auctions with a common reserve price. As mentioned before,
in our model, the seller is also uncertain of the marginal distributions themselves,
and we show optimality of the second-price auction within a large class of mechanisms that
includes non-DSIC mechanisms in this setting.

\citet{bbm2019revenueguarantee} perform a similar exercise of comparing
standard auction mechanisms according to their revenue guarantee.
They study the interdependent value environment in which there is
a $\emph{commonly known}$ value distribution which is symmetric across
bidders, but the seller is uncertain about the signals possessed by
the bidders. They find revenue guarantee optimality of the first-price
auction among standard auctions, and revenue equivalence among standard
auctions under the restricted domain of common value with symmetric
affiliated signals. Our model does not assume a common knowledge of
value distribution, and we allow for asymmetric means and a broader
class of mechanisms. At the same time, we do not allow for interdependent
values. In this sense, the current paper complements this work.

\citet{brooks-du-2021} find a minimax mechanism subject to an adversarial
nature who chooses the information structure, value distribution,
and the equilibrium in order to minimize the seller's revenue, given
the same mean and range constraints as discussed in this paper. The
minimax mechanism is a proportional auction rule and the worst-case
information structure involves independent and exponentially-distributed
signals for each bidder. The current paper complements their work
as it can be seen as a constrained version of their problem in which
nature can choose only among private-value information structures.
This constraint makes the analysis of the present paper distinct from
theirs in a nontrivial way, and their method and analysis do not apply
to the current paper. In this sense, the current paper addresses auction
settings in which it is natural to assume that bidders know their
own valuations and it is of greater concern for the seller to address
the uncertainty about the bidders' valuations or in the information
they possess about each other. It is also noteworthy and of appeal
that the current paper finds a standard auction format as the
optimal mechanism within a wide range of mechanisms that include other
commonly used auction rules. Finally, the optimal mechanism in this
paper is symmetric across all bidders even if bidders are known to
be asymmetric ex-ante, which is in contrast with the proportional
auction rule they propose, which assigns different allocation probabilities
to different bidders and even assigns zero allocation probability
for bidders whose means are too low.

Finally, the current paper is related to the literature on discriminatory
auction design. Following \citet{Myerson1981}, the authors have put
forth arguments that treating (some times even symmetric) bidders
asymmetrically may increase revenue (e.g. \citet{ayres-cramton-1996}).
Some authors argue that the symmetric treatment of bidders may result
from legal constraints mandating nondiscrimination. \citet{deb-pai-2017}
suggested ways to achieve symmetry with a more complex auction rule.
By contrast, the current paper suggests that symmetry can arise naturally
if the seller is concerned about robustness. To the best of our knowledge,
the current paper is the first to make this observation.

The remainder of the paper is organized as follows. Section \ref{sec:Model}
introduces the model of the paper. In section \ref{sec:second-price-auction},
we fix the mechanism to be a second-price auction and obtain the optimal
reserve price. In section \ref{sec:Optimal-Competitive-Mechanism},
we introduce the class of competitive mechanisms and prove that the
second-price auction with the optimal reserve price achieves the highest
revenue guarantee among all competitive mechanisms. Section \ref{sec:Conclusion}
concludes.

\section{\label{sec:Model}Model}

Consider a seller selling a single good to one of $n$ bidders. The
valuations of the bidders are distributed according to a distribution
$F$, which is unknown to the seller. Instead, the seller knows that
the support of $F$ is contained in $V:=[0,1]^{n}$ and knows the
means for each of bidders' marginal distributions $\{m_{i}\}_{i=1}^{n}$,
where $m_{i}\in(0,1)$ for each bidder $i$. Without loss, we index
the bidders and the means in descending order of the means (unless
specified otherwise): $m_{1}\geq m_{2}\geq...\geq m_{n}$. We denote
the set of probability distributions satisfying these constraints
as $\mathcal{P}\left(V,\{m_{i}\}_{i=1}^{n}\right)$, and in general
we denote the set of probability distributions on an underlying set
$X$ as $\mathcal{P}(X)$.

The seller also knows that each bidder $i$ observes his value $v_{i}$,
which makes this a private value auction. However, he is uncertain
about other bidders' values or the information they may have about their opponents. We denote
$I\in\mathcal{I}$ to be the information structure that defines the
information possessed by the bidders. Formally, we consider an admissable
class of information structures $\mathcal{I}$ such that for all $I\in\mathcal{I}$,
$I\equiv(S,K)$ where $S\equiv\prod_{i=1}^{n}(\Sigma_{i},V_{i})=\Sigma\times V$
is a measurable set of signals (including the bidder's own valuation)
and $K\in\mathcal{P}(\Sigma|V)$ is a conditional distribution over
$\Sigma$ given realized valuations. In other words, each bidder
observes $(\sigma_{i},v_{i})\in(\Sigma_{i},V_{i})$, where $v_{i}$
is the bidder's valuation and $\sigma_{i}$ is a signal. We impose
no restrictions on the signal space $\Sigma_{i}$, which is allowed
to be very general. For instance, it may contain one's high-order
beliefs about their opponents' values and their higher-order beliefs,
etc. It is also possible that bidders may only have partial or no knowledge about
the value distribution. One important special case is the \emph{uninformed
common-prior} (UCP) information structure $I_{0}\in\mathcal{I}$,
in which each bidder observe the signal $\sigma_{i}=F$ along with
their private values, i.e., the bidders have (1) common knowledge
of the value distribution $F$ and (2) no other knowledge of their
opponents valuations.

The seller chooses a mechanism $M=(A,x,t)$, where $A=\prod_{i=1}^{n}A_{i}$
is the set of actions for each player, $x:A\to[0,1]^{n}$ is the allocation
rule subject to the constraint $\sum_{i=1}^{n}x_{i}(a)\leq1$, and
$t:A\to\mathbb{R}^{n}$ is the transfer rule. The action set is also
allowed to be very general, and can even involve bidders reporting
their own types and beliefs (and possibly all higher-order beliefs)
\citep[see][]{bergemann-morris-2005}. A mechanism is a direct mechanism
if $A=V$. A value distribution $F$, information structure $I$ and
a mechanism $M$ define a Bayesian game $\Gamma(M,I,F)$ among the
bidders. Bidders' signals $S=(\Sigma,V)$ are drawn according to $Q\equiv K\times F\in\mathcal{P}(S)$.
Each bidder plays a strategy $\beta_{i}:(S_{i},V_{i})\to\mathcal{P}(A_{i})$,
which determines a probability distribution over actions given the
realized signal. Once the actions are realized, the allocation and
transfers are determined according to $(x,t)$. 

A strategy profile $\beta=\{\beta_{i}\}_{i=1}^{n}$ is a Bayes-Nash
equilibrium if the individual rationality (IR) and incentive-compatibility
conditions (IC) are met for all $i\in[n]$:

\begin{align}
U_{i}(\beta) & \geq0,\tag{IR}\label{eq:(IR)-1}\\
U_{i}(\beta) & \geq U_{i}(\beta_{i}',\beta_{-i}),\;\;\forall\beta_{i}',\tag{IC}
\end{align}
where 
\[
U_{i}(\beta)\equiv\int_{S}\int_{A}\left[v_{i}x_{i}(a)-t_{i}(a)\right]\beta(da|s)Q(ds).
\]
Under the UCP information structure $I_{0}$ and any fixed value
distribution $F$, the revelation principle applies, so one can without
loss consider a direct mechanism where $A=V$ that satisfies the Bayesian
individual-rationality (IR-0) and incentive-compatibility (IC-0) constraints:

\begin{align}
U_{i}(v_{i}) & \geq0,\tag{IR-0}\label{eq:IR-0}\\
U_{i}(v_{i}) & \geq U_{i}(\hat{v}_{i}|v_{i}),\;\;\forall\hat{v}_{i},\tag{IC-0}\label{eq:IC-0}
\end{align}
in which $U_{i}(\hat{v}_{i}|v_{i})\equiv\mathbb{E}_{v_{-i}\sim F}[u_{i}(\hat{v}_{i},v_{-i}|v_{i})]$
is the interim utility of bidder $i$ when he reports value $\hat{v}_{i}$,
$U_{i}(v_{i})\equiv U_{i}(v_{i}|v_{i})$ and $u_{i}(v|v_{i})\equiv v_{i}x_{i}(v)-t_{i}(v)$
is the realized utility.

Given a game $\Gamma(M,I,F)$, we denote the set of Bayes-Nash equilibria
in undominated strategies as $B(M,I,F)$. Under a Bayes-Nash equilibrium,
the revenue is simply the sum of the expected transfers from each
bidder:

\[
\mathcal{R}(M,I,F,\beta)\equiv\sum_{i=1}^{n}\int_{S}\int_{A}t_{i}(a)\beta(da|s)Q(ds)
\]

Let $\mathcal{R}(M,I,F)=\inf_{\beta\in B(M,I,F)}\mathcal{R}(M,I,F,\beta)$
be the worst-case revenue across all Bayes-Nash equilibria in undominated
strategies. If $B(M,I,F)$ is empty, we let $\mathcal{R}(M,I,F)=0$.
 We say that mechanism $M$ \emph{guarantees} revenue $R$ if $R\leq\mathcal{R}(M,I,F)$
for all $F\in\mathcal{P}\left(V,\{m_{i}\}_{i=1}^{n}\right),I\in\mathcal{I}$.
We say that $M^{*}$ is \emph{revenue guarantee optimal} within a class of
mechanisms $\mathcal{M}$ if $M^{*}\in\arg\max_{M\in\mathcal{M}}\min_{F,I}\mathcal{R}(M,I,F).$
In order to solve for the revenue guarantee optimal mechanism, we
consider a simultaneous-move zero-sum game between an adversarial
nature and the seller. Nature chooses the worst case distribution
and the information structure $\left(F,I\right)$ in order to minimize
the seller's expected revenue given the seller's mechanism $M$, and
the seller chooses $M$ in order to maximize his utility given the
worst-case distribution $\left(F,I\right)$.

We define an equilibrium of this game to be a value distribution and
information structure $\left(F^{*},I^{*}\right)$ and mechanism $M^{*}\in\widehat{\mathcal{M}}$
within some admissible class $\widehat{\mathcal{M}}$, which we shall
specify in Section \ref{sec:Optimal-Competitive-Mechanism}, such
that:

\begin{equation}
\mathcal{R}\left(M,I^{*},F^{*}\right)\leq\mathcal{R}\left(M^{*},I^{*},F^{*}\right)\leq\mathcal{R}\left(M^{*},I,F\right),\label{eq:mechanism_saddle_point}
\end{equation}
for all $F\in\mathcal{P}\left(V,\{m_{i}\}_{i=1}^{n}\right)$, information
structures $I\in\mathcal{I}$, and mechanisms $M\in\widehat{\mathcal{M}}$.

By the min-max theorem, $M^{*}$ is revenue guarantee optimal if $(M^{*},I^{*},F^{*})$
satisfies (\ref{eq:mechanism_saddle_point}):\footnote{By the max-min inequality, $\sup_{M}\inf_{F,I}\mathcal{R}(M,I,F)\leq\inf_{F,I}\sup_{M}\mathcal{R}(M,I,F)$.
From the saddle point condition, we can obtain the reverse inequality:
\[
\inf_{F,I}\sup_{M}\mathcal{R}(M,I,F)\leq\sup_{M}\mathcal{R}(M,I^{*},F^{*})\leq\mathcal{R}(M^{*},I^{*},F^{*})\leq\inf_{F,I}\mathcal{R}(M^{*},I,F)\leq\sup_{M}\inf_{F,I}\mathcal{R}(M,I,F).
\]
Combining these two inequalities, we obtain the desired equality:
$\mathcal{R}(M^{*}, I^{*},F^{*})=\sup_{M}\inf_{F}\mathcal{R}(M,I,F)=\inf_{F}\sup_{M}\mathcal{R}(M,I,F).$
\citep[see Proposition 22,][]{osborne-rubin}}

\[
\mathcal{R}(M^{*},I^{*},F^{*})=\sup_{M}\inf_{F,I}\mathcal{R}(M,I,F)=\inf_{F,I}\sup_{M}\mathcal{R}(M,I,F).
\]
This max-min characterization implies that all equilibria will be
payoff equivalent, even though there could be multiple equilibria
that generate this revenue.

We proceed our analysis in two steps. In Section \ref{sec:second-price-auction},
we fix the auction rule to be a second-price auction and allow the
seller to choose the reserve price optimally against the worst case distribution
$F^{*}$ chosen by nature. We prove the saddle-point inequalities
among this class of mechanisms, and illustrate some properties of
the equilibrium. In Section \ref{sec:Optimal-Competitive-Mechanism},
we prove that the optimal second-price auction obtains the optimal
revenue guarantee among a wide class of mechanisms we refer to as
\emph{competitive} mechanisms (which we will formally define later).

\section{\label{sec:second-price-auction}Robustly Optimal Second-price Auction}

In this section, we restrict our attention to second-price auction
mechanisms and solve for the optimal reserve price within this class
of mechanisms. We also find the worst-case distribution, with which
the optimal reserve price obtains the revenue guarantee.

\subsection{Second-price auctions with random reserve price}

Suppose the seller uses a second-price auction and sets his reserve
price $p$ according to some CDF $G$. Let $S_{G}$ denote the resulting
mechanism. Since a bidder with valuation $v_{i}$ has a dominant strategy
of bidding his valuation $v_{i}$, regardless of his information about
his opponents, he will win if his valuation is highest and exceeds
the reserve price. Assuming that ties are broken at random, $S_{G}$
can be expressed in direct mechanism $(V,x,t)$ where: 

\begin{align*}
x_{i}({\bf v}) & =G(v_{i})\frac{1_{\left\{ v_{i}=v^{(1)}\right\} }}{\#\{v_{j}=v^{(1)}\}}\\
t_{i}({\bf v}) & =\mathbb{E}_{G}\left[\max\{v^{(2)},p\}\cdot1_{\{v^{(1)}>p\}}\right]\cdot1_{\left\{ v_{i}=v^{(1)}\right\} },
\end{align*}
where $v^{(k)}$ denotes the $k$th highest valuation among $(v_{1},...,v_{n})$,
and $\mathbb{E}_{G}$ refers to expectation with respect
to $p\sim G$.

Since the mechanism is fixed at second-price auctions, except for
the reserve price, it remains to identify its distribution $G$ that
is robustly optimal. As mentioned in the general setting, the seller's
belief over valuations and his reserve price strategy is determined
as the outcome of zero-sum, simultaneous move game between him and
an adversarial nature. The seller chooses a (possibly degenerate)
reserve price $G\in\mathcal{P}\left([0,1]\right)$ given the worst-case
distribution $F$. Nature chooses the distribution of bidders' valuations
$F\in\mathcal{P}\left(V,\{m_{i}\}_{i=1}^{n}\right)$ and the bidders'
information structure $I\in\mathcal{I}$ given the seller's reserve
price $G$. 

However, unlike in the setting where the seller picks an arbitrary
mechanism $M$, nature's choice of the information structure $I$
does not affect the seller's revenue, since bidders will play their
dominant strategies regardless of the realized signals. In other words,
for any fixed value distribution $F$ and reserve price $G$, there
is a unique truth-telling Bayes-Nash equilibrium in undominated strategies
$\beta_{i}(v_{i}),\forall v_{i},\forall i$, which is the same across
all private-value information structures. This implies, 

\[
\mathcal{R}(S_{G},I,F)=\inf_{\beta\in B(F,I,M)}\mathcal{R}(S_{G},I,F,\beta)=\inf_{\beta\in B(F,I',M)}\mathcal{R}(S_{G},I',F,\beta)=\mathcal{R}(S_{G},I',F,),\text{ }\forall F,I,I',G.
\]
Because of this, we will ignore the information structure and focus
solely on the value distribution. Throughout this section, we will
assume that $F$ is equipped with the uninformed common-prior (UCP) information
structure $I_{0}$.

Given this behavior, we denote the expected revenue given $F$ and
$G$, as follows:

\[
\Psi(G,F):=\mathcal{R}(S_{G},I_{0},F)=\sum_{i=1}^{m}\int_{V}t_{i}({\bf v})dF({\bf v})=\int_{V}\int_{0}^{1}\max\{v^{(2)},p\}\cdot1_{\{v^{(1)}>p\}}dG(p)dF({\bf v}).
\]

Since the auction format is fixed and the information structure is
irrelevant (and set to $I_{0}$ without loss), an equilibrium of this
restricted game between nature and the seller is defined to be a pair
of strategies $G^{*}\in\mathcal{P}\left([0,1]\right)$ and $F^{*}\in\mathcal{P}\left(V,\{m_{i}\}_{i=1}^{n}\right)$
such that:

\begin{equation}
\Psi(G,F^{*})\leq\Psi(G^{*},F^{*})\leq\Psi(G^{*},F),\label{eq:saddle_point}
\end{equation}
for all $F\in\mathcal{P}\left(V,\{m_{i}\}_{i=1}^{n}\right)$ and $G\in\mathcal{P}\left([0,1]\right)$.
The equilibrium revenue can be interpreted as the optimal revenue
guarantee among reserve price strategies:

\[
\Psi(G^{*},F^{*})=\sup_{G}\inf_{F}\Psi(G^{*},F^{*})=\inf_{F}\sup_{G}\Psi(G^{*},F^{*}).
\]

Toward finding the revenue guarantee, our strategy will be to present
a conjectured strategy profile and verify that it satisfies the saddle-point
inequalities. Before presenting the main result of this paper, which
provides the solution with $n\geq2$ bidders, it is helpful to first
illustrate the solution in a simpler setting.

\subsection{Two bidder example}

We illustrate our results with the case of two bidders with symmetric
mean $m\in(0,1)$. First, observe that as in the single buyer case
in \citet{carrasco2018optimalselling}, there is no pure strategy
equilibrium. Assume, for the sake of argument, that the marginal mean
is $m=1/2$.\footnote{This is generally the case for any $n\geq2$ bidders and for any $m\in(0,1)$
as can be seen in the Stackelberg second-price auction equilibrium
in \citet{distributionallyRobust2019}.} Suppose, to the contrary, there were a pure strategy equilibrium
in which the seller chooses a (deterministic) reserve price $p$.
Clearly $p<m$, or else nature will put all mass on the value profile
$(m,m)$, resulting in no sale with probability one. For any $p<m$ the seller may choose,
nature's best response will put positive mass at $(p,p)$, in which
case no sale occurs and the seller receives zero revenue (assuming
ties are broken in favor of nature). However, if the seller lowers
his reserve price to some $p'<p$, the reserve price is no longer
binding at $(p,p)$ and a sale occurs with revenue equal to $p$.
Since the deviation is strictly profitable, there is no pure strategy
equilibrium. Therefore the seller must randomize in reserve price
in equilibrium.

\begin{figure}[h]
\caption{\label{fig:hyperplane_support} $\phi(\mathbf{v};G^{*})$ and the
support of $F^{*}$ in the two bidder equilibrium ($m=\frac{1}{2})$.}

\bigskip{}

\begin{centering}
\includegraphics[width=5in]{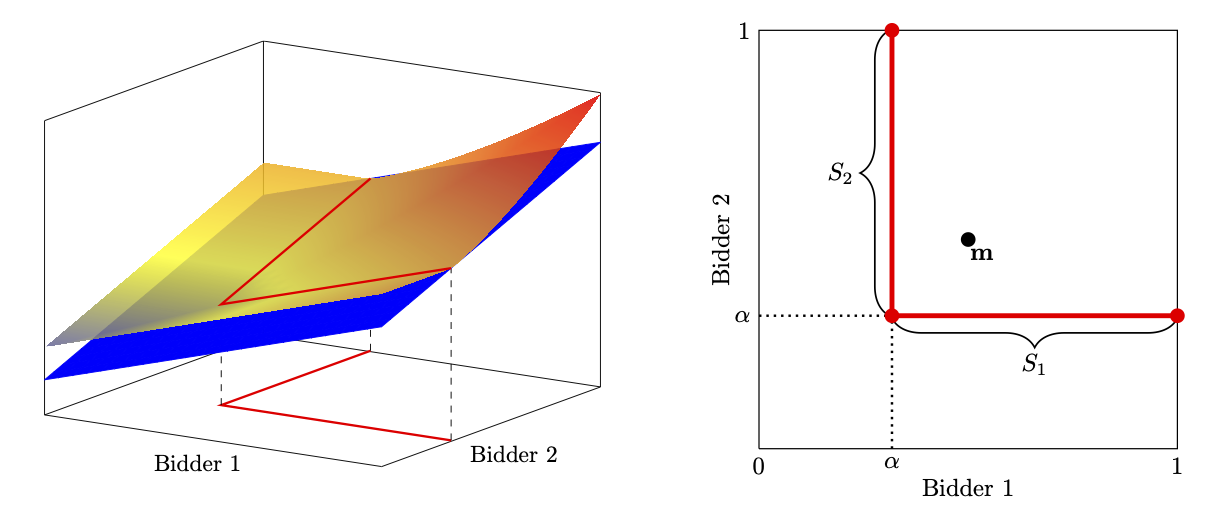}
\par\end{centering}
\bigskip{}

{\scriptsize{}{}{{\scriptsize{}{}Notes:}} On the left are
the expected revenue function $\phi({\bf v};G^{*})$ (in red) induced
by the seller's equilibrium strategy $G^{*}$, the supporting affine
function $L(\mathbf{v})$ (in blue), and the intersection of the two
(in bold red). On the right, the support of $F^{*}$ (in bold red)
and the mean point. When $m=\frac{1}{2}$, $\alpha\approx0.317$.}{\scriptsize\par}
\end{figure}

To find an equilibrium with random reserve price, we first conjecture
both nature's and the seller's strategies, and verify that they constitute
a saddle point. We conjecture that nature's strategy $F^{*}$ randomly
picks one bidder to be the highest bidder, with value randomly drawn
from CDF $H$, and sets the value of the second highest bidder to
be as low as possible, perhaps to some lower bound $\alpha$ (which
is yet to be determined), producing value profiles of the form $(v,\alpha)$
or $(\alpha,v)$, where $v$ is drawn from $H(v)$. Visually, the
support of $F$ has an ``L''-shape, as can be seen in the right
panel of Figure \ref{fig:hyperplane_support}. Formally, $\text{supp}(F^{*})=S_{1}\cup S_{2}$
where $S_{1}:=\left\{ (v,\alpha):v\in(\alpha,1]\right\} $ and $S_{2}:=\left\{ (\alpha,v):v\in(\alpha,1]\right\} $.
Each segment $S_{i}$ contains the realized value profiles when the
$i$th bidder is chosen to be the highest (and the other the second
highest).

In order to pin down $H(v)$, we consider $H(v)$ which makes the
seller indifferent across all (possible) reserve prices in $[0,1)$.
Suppose the seller charges a reserve price $p$ facing bidders whose
valuations are distributed as conjectured above. Then, he earns the
revenue:

\begin{align*}
\eta(p;F) & =\int_{V}\max\{v^{(2)},p\}\cdot1_{\{v^{(1)}>p\}}dF\\
 & =\begin{cases}
2\cdot\frac{1}{2}p(1-H(p)) & \text{if }p>\alpha\\
\alpha & \text{if }p\leq\alpha
\end{cases}\\
 & =\begin{cases}
p(1-H(p)) & \text{if }p>\alpha\\
\alpha & \text{if }p\leq\alpha.
\end{cases}
\end{align*}

For the seller to be indifferent across reserve prices, we require
that for all $p\in(\alpha,1)$:

\[
p(1-H(p))=\alpha,\text{ or }H(p)=\frac{p-\alpha}{p}.
\]

Since $\lim_{p\to1}H(p)=1-\alpha$, there must be a mass of $\alpha$
at $1$. Intuitively, for the seller to earn $\alpha$ even for $p$
arbitrarily close to 1, nature must put a mass of $\alpha$ at $1$.
Our conjecture for $H$ is therefore:

\[
H(v)=\begin{cases}
0 & \text{if }v\leq\alpha\\
\frac{v-\alpha}{v} & \text{if }v\in(\alpha,1)\\
1 & \text{if }v=1.
\end{cases}
\]
This distribution has density $h(v)=\frac{\alpha}{v^{2}}$ on $v\in(\alpha,1)$.

We can pin down $\alpha$ using the mean constraint on $F$. Each
bidder $i$ is chosen to be the highest bidder with probability $1/2$
, in which case his valuation is determined by $H$. With the remaining
probability, he is chosen to be the second highest bidder, with valuation
$\alpha$. Hence, the mean of each bidder's marginal distribution
is:

\[
m=\frac{1}{2}\alpha+\frac{1}{2}\left(\alpha+\int_{\alpha}^{1}vh(v)dv\right)=\alpha+\frac{1}{2}\int_{\alpha}^{1}\frac{\alpha}{v}dv=\alpha\left(1-\frac{1}{2}\ln(\alpha)\right),
\]
This equation pins down a unique solution for $\alpha$, as we will
later argue in the general $n$ bidder setting. When $m=1/2$, $\alpha=0.317$,
as is depicted in Figure \ref{fig:hyperplane_support}.

Next, we construct $G^{*}$ so that the conjectured $F^{*}$ will
be a best response by nature. If we take a step back, we can see that
for any given strategy $G$ by the seller, nature faces the following
problem:

\begin{align}
\inf_{F} & \int_{V}\phi(\mathbf{v};G)dF\label{eq:nature_program-1}\\
\text{s.t. } & \int_{V}v_{i}(\mathbf{v})dF=m,\quad i=1,2\nonumber \\
 & \int_{V}dF=1\nonumber \\
 & F\geq0,\nonumber 
\end{align}
where $v_{i}:V\to\mathbb{R}$ is the projection map on the $i$th
bidder's valuation and $\phi$ is the expected revenue from nature's
point of view, $\phi(\mathbf{v};G):=\int_{0}^{1}\max\{v^{(2)},p\}\cdot1_{\{v^{(1)}>p\}}dG.$

It is useful to observe that this is a linear program in $F$. Given
our conjecture for $F^{*}$, we can use the dual of this program to
verify the optimality of $F^{*}$ with respect to the expected revenue
function $\phi$ induced by the seller's strategy $G$. We use a complementary
slackness condition expressed in the following lemma. Note also that
this lemma applies generally to the $n$-bidder case, and in fact,
we will reuse this lemma to prove the optimality of our candidate
$F^{*}$ in the general case. 
\begin{lem}
\label{lem:duality}Given the seller's strategy $G$, $F^{*}$ is
an optimal solution for nature's problem if and only if $F^{*}$ is
feasible and there exists an affine function $L:V\to\mathbb{R}$ such
that

\begin{align*}
L(\mathbf{v}) & \leq\phi(\mathbf{v};G),\text{ }\forall\mathbf{v}\in V\text{ and }\\
\text{\emph{supp}}(F^{*}) & \subseteq\{\mathbf{v}:L(\mathbf{v})=\phi(\mathbf{v};G)\}.
\end{align*}
\end{lem}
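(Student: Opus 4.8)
The plan is to prove this as a standard application of linear programming duality, since nature's problem \eqref{eq:nature_program-1} is an infinite-dimensional linear program (minimizing a linear functional $\int_V \phi\, dF$ over the measure $F$, subject to the linear mean constraints and the normalization) and Lemma~\ref{lem:duality} is precisely the complementary-slackness/strong-duality characterization of its optimal solutions. First I would write down the dual program. The primal has one constraint $\int_V v_i\,dF = m$ for each bidder $i$ (with dual variables $\lambda_i$) and the normalization $\int_V dF = 1$ (with dual variable $\lambda_0$), together with the nonnegativity $F \geq 0$. The dual is to maximize $\lambda_0 + \sum_i \lambda_i m$ over $(\lambda_0, \{\lambda_i\})$ subject to the pointwise constraint $\lambda_0 + \sum_i \lambda_i v_i \leq \phi(\mathbf{v};G)$ for every $\mathbf{v} \in V$. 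Defining $L(\mathbf{v}) := \lambda_0 + \sum_i \lambda_i v_i$, the dual feasibility condition is exactly $L(\mathbf{v}) \leq \phi(\mathbf{v};G)$ for all $\mathbf{v}$, and $L$ ranges over precisely the affine functions on $V$. This identifies the dual variables with the affine minorant $L$ appearing in the lemma.

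With this setup, the two directions follow the usual pattern. For the ``if'' direction (sufficiency), suppose $F^*$ is feasible and such an affine $L$ exists with $\mathrm{supp}(F^*) \subseteq \{\mathbf{v}: L(\mathbf{v}) = \phi(\mathbf{v};G)\}$. Then for any feasible $F$, using $L \leq \phi$ pointwise and the fact that $L$ is affine so that $\int_V L\,dF$ depends on $F$ only through its mean and total mass,
\[
\int_V \phi\,dF \;\geq\; \int_V L\,dF \;=\; \lambda_0 + \sum_i \lambda_i m \;=\; \int_V L\,dF^* \;=\; \int_V \phi\,dF^*,
\]
where the final equality uses the support condition (so $\phi = L$ holds $F^*$-almost everywhere). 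Hence $F^*$ attains the minimum and is optimal. The ``only if'' direction (necessity) is the substantive half: given that $F^*$ solves the primal, I would invoke an infinite-dimensional strong-duality theorem to produce a dual optimizer $L$ with $L \leq \phi$ and matching objective value $\int_V L\,dF^* = \int_V \phi\,dF^*$; combining $L \leq \phi$ pointwise with equality of the integrals against the nonnegative measure $F^*$ forces $L = \phi$ on $\mathrm{supp}(F^*)$, giving the support inclusion.

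The main obstacle is justifying strong duality (absence of a duality gap and attainment of the dual optimum) in the infinite-dimensional setting, since $F$ ranges over probability measures on $V = [0,1]^n$ rather than a finite-dimensional vector. The clean way to handle this is to appeal to a moment-problem duality theorem: $\phi(\cdot;G)$ is bounded (as $\phi \in [0,1]$) and, since $G$ is a fixed distribution and $\max\{v^{(2)},p\}\cdot 1_{\{v^{(1)}>p\}}$ is upper semicontinuous in $\mathbf{v}$, $\phi$ is upper semicontinuous; $V$ is compact; and the mean constraints admit a feasible measure in the relative interior of the moment cone because each $m_i \in (0,1)$ lies strictly inside the range of $v_i$. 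These conditions are exactly the standard Slater-type/compactness hypotheses under which generalized semi-infinite linear programming guarantees no duality gap and dual attainment (see, e.g., the duality theory for the classical moment problem). I would state this as the one external input and then carry out the complementary-slackness argument above, which is purely algebraic once $L$ is in hand.
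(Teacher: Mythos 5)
Your proposal is correct and takes essentially the same approach as the paper: the same complementary-slackness argument for the ``if'' direction (equality on the support, affineness plus matching means, pointwise minorization), and for ``only if'' the same dual program with strong duality justified by the same interior-point condition --- the paper invokes Theorem 3.12 of Anderson and Nash for semi-infinite linear programs, noting that the point mass $\delta_{\mathbf{m}}$ lies in the interior of the primal cone, which is your Slater-type condition $m_{i}\in(0,1)$. One incidental slip that does not affect validity: the integrand $\max\{v^{(2)},p\}\cdot 1_{\{v^{(1)}>p\}}$ is lower (not upper) semicontinuous in $\mathbf{v}$, but the no-gap and dual-attainment conclusion rests on boundedness of $\phi$ and the interior-moment condition rather than on semicontinuity, so your argument goes through as stated.
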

\begin{proof}
We prove the ``if'' part. The proof of the ``only if'' part, which
requires the dual program, is relegated to the Appendix.

Suppose there exist feasible $F^{*}$ and an affine function $L$
satisfying the requirements. For any $F\in\mathcal{P}(V,m)$, this
implies

\[
\int_{V}\phi(\mathbf{v};G)dF^{*}=\int_{V}L(\mathbf{v})dF^{*}=\int_{V}L(\mathbf{v})dF\leq\int_{V}\phi(\mathbf{v};G)dF.
\]
The first equality follows because $\phi$ and $L$ coincide on the
support of $F^{*}$. The second equality follows because $L$ is affine
and $F$ and $F^{*}$ have the same mean. The final inequality follows
since $L(\mathbf{v})\leq\phi(\mathbf{v};G)$ for all $\mathbf{v}$.
Hence $F^{*}$ is an optimal solution for nature. 
\end{proof}
This lemma states that in order for our conjectured $F^{*}$ to be
a best response to the seller's strategy $G$, there must exist an
affine function $L(\mathbf{v})$ such that expected revenue function
$\phi$ induced by $G$ coincides with $L$ on the support of $F^{*}$,
and is everywhere above $L$. The key implication of this lemma is
that if we can find a strategy $G^{*}$ and an affine function $L$
such that the conditions of the lemma are satisfied, then $(G^{*},F^{*})$
is an equilibrium.\footnote{This duality lemma borrows from the approach used in \citet{scarf1957minmax}
in the context of inventory management, in which he constructs a supporting
quadratic polynomial to solve for the worst-case demand distribution
under a mean and variance constraint. This technique has also been
used in the context of monopoly pricing in \citet{carrasco2018optimalselling},
in an optimal transport program for an auction setting in \citet{He2019RobustlyOptimal},
and in Bayesian persuasion in \citet{Dworczak2019}. One difference
relative to Scarf and several other recent work is that the primal
objective function $\phi$ is itself endogenous. Unlike finding the
dual for a fixed $\phi$, we are simultaneously choosing $G$ and
$L$ to satisfy complementary slackness.} We use these conditions as a tool for verifying an equilibrium. In
particular, we will find $G$ that satisfies the conditions, given
the $F^{*}$ we already conjectured.

Given the symmetry of $F^{*}$, it is without loss to focus on $S_{1}$
for evaluating $\phi({\bf v};G)$, where $S_{1}$ is the subset of
the ``L''-shape support in which nature picks the bidder 1 to be
the highest bidder. On $S_{1}$, the expected revenue function will
be:

\begin{align*}
\phi(\mathbf{v};G) & =\int_{0}^{1}\max\{v^{(2)},p\}\cdot1_{\{v^{(1)}>p\}}dG\\
 & =\int_{0}^{v_{2}}v_{2}g(p)dp+\int_{v_{2}}^{v_{1}}pg(p)dp,
\end{align*}
assuming that $G$ admits density $g(p)$, which will be the case
as we show below. For the conjectured $F^{*}$ to be optimal, Lemma
\ref{lem:duality} requires that $\phi(\mathbf{v};G)$ must be equal
to some affine function $L(\mathbf{v})=c_{1}v_{1}+c_{2}v_{2}+d$ on
this segment. This implies that for all $(v,\alpha)\in S_{1}$:

\[
\frac{\partial\phi}{\partial v_{1}}(v,\alpha)=vg(v)=c_{1},\text{ or }g(v)=\frac{c_{1}}{v}.
\]

This logic applies symmetrically if nature had picked bidder 2 to
be the highest bidder instead, so we can consider $c_{1}=c_{2}=c$.
This pins down the form of the sender's strategy for $p\in[\alpha,1]$.
In order to satisfy the remaining conditions of the Lemma, we now
can find $c$ so that $\phi\geq L$ for all value profiles.

Next, consider any ${\bf v}=(v,\alpha)$, for $v>\alpha$. Since $(v,\alpha+\epsilon)$
is outside $\text{supp}(F^{*})$, for any $\epsilon>0$ or $\epsilon<0$,
Lemma 1 requires that $\phi(v,\alpha+\epsilon)\ge L(v,\alpha+\epsilon)$,
for any $\epsilon>0$ or $\epsilon<0$ and that $\phi(v,\alpha)=L(v,\alpha)$.
This implies that:

\[
\frac{\partial\phi}{\partial v_{2}}(v,\alpha)=\frac{\partial L}{\partial v_{2}}(v,\alpha)\implies G(\alpha)=c.
\]

%We must have $c=G(\alpha)$ or else $\phi(v,\alpha-\epsilon)<L(v,\alpha-\epsilon)$
%for $\epsilon$ small enough. 

Since $G$ must integrate to $1$, we find:

\[
1=\int_{0}^{1}dG(p)=G(\alpha)+\int_{\alpha}^{1}\frac{G(\alpha)}{p}dp=G(\alpha)(1-\ln(\alpha)),
\]
so:

\[
c=G(\alpha)=\frac{1}{1-\ln(\alpha)}.
\]

We will choose $G(0)=G(\alpha)=c$, so that the seller places the
entire mass of $c$ at $p=0$. This will be seen to be optimal for
the seller, although it is possible that there could be other payoff-equivalent
equilibria in which the seller may put density on $[0,\alpha]$. Note
that unlike the equilibrium for single buyer case in \citet{carrasco2018optimalselling},
the support of the seller and nature do not fully coincide. In other
words, with positive probability, the seller chooses a reserve price
that guarantees sale. As we will see later, this probability increases
as the number of bidders increases. The intuition is that as competition
increases, the seller can rely less on a ``binding'' reserve price
to guarantee revenue as competition between the bidders limits the
ability of nature to suppress revenue.

Finally, the condition $\phi=L$ at $(\alpha,\alpha)$ pins down the
constant $d=\frac{\alpha}{1-\ln(\alpha)}$ of $L({\bf v})$. Thus,
the strategy $G^{*}$ and the affine function $L$ are fully characterized
as follows:

\begin{align*}
G^{*}(p) & =\begin{cases}
\int_{\alpha}^{p}\frac{1}{r\left(1-\ln(\alpha)\right)}dr+\frac{1}{1-\ln(\alpha)} & \text{if }p>\alpha\\
\frac{1}{1-\ln(\alpha)} & \text{if }p\leq\alpha
\end{cases}\\
L(\mathbf{v}) & =\frac{1}{1-\ln(\alpha)}(v_{1}+v_{2})-\frac{\alpha}{1-\ln(\alpha)}.
\end{align*}

One can check that $\phi({\bf v};G^{*})\geq L(\mathbf{v})$ for all
$\mathbf{v}\in V$ and $\phi=L$ on the support of $F^{*}$, satisfying
the conditions of Lemma \ref{lem:duality}. The left panel of Figure
\ref{fig:hyperplane_support} visualizes how $\phi$ and $L$ satisfy
this requirement for the case of $m=1/2$. Lemma \ref{lem:duality}
guarantees that $F^{*}$ will be a best response to $G^{*}$. Recall
that $F^{*}$ was constructed to make the seller indifferent across
all reserve prices $p\in[\alpha,1)$ and thus $G^{*}$ is a best response
to $F^{*}$. Thus, we have found an equilibrium for the two-bidder
game with symmetric mean constraints.

Two differences emerge with the introduction of the second bidder
(compared to the monopoly pricing equilibrium in \citet{carrasco2018optimalselling}).
First, as mentioned earlier, the seller puts mass strictly below the
support of the marginal distribution of $F$. Second, the marginal
distribution of $F$ has two mass points, one at the upper bound $v=1$
(as in the single buyer equilibrium), and one at the lower bound $\alpha$
(unlike the single buyer equilibrium).\footnote{As mentioned before, the marginal distribution of the optimal $F^{*}$
is outside the class of marginal distributions considered in \citet{He2019RobustlyOptimal},
which does not allow for mass points. The authors also require that
$xf(x)$ is weakly increasing in $x$, where $f$ is the exogenously
given density. Even in the region where the marginal distribution
admits density $h$, the density fails this condition: $vh(v)=\frac{\alpha}{v^{2}}v=\frac{\alpha}{v}$,
which is decreasing in $v$.}

\subsection{Optimal reserve price with $n$ (possibly asymmetric) bidders}

In this section, we present a profile of candidate strategies $(G^{*},F^{*})$
that form a second-price auction equilibrium in the $n$-bidder case
with (possibly) asymmetric mean constraints. The equilibrium is a
natural extension of the two bidder, symmetric mean equilibrium. In
Appendix A.3, we will prove that these strategies satisfy the saddle
point inequalities.

\paragraph{Nature's strategy.}

Nature's worst case distribution retains the ``L''-shape structure,
randomly choosing a bidder to be the highest bidder (with value distributed
according to $H(v)$) while setting the valuations of the remaining
bidders to a lower bound $\alpha$. However, a few modifications are
introduced when there are bidders with asymmetric means. First, the
``L''-shape joint distribution may not include all $n$ bidders.
If there are bidders with means strictly less than $\alpha$,
they cannot be included in this distribution without violating these
constraints. Instead, nature will set the marginal distributions of
these bidders to a point mass on their means. Second, among the ``active''
bidders who are included in the distribution, nature selects one of
them to be the highest-value bidder with non-uniform probabilities.
We will see that higher mean bidders are more likely to be selected
as the highest-value bidder. 

We will proceed to describe the set of active bidders and the joint
distribution among the active and inactive bidders.

\paragraph{Selection of Active Bidders.}

Recall that the bidders are indexed such that $m_{1}\ge m_{2}\geq...\geq m_{n}$.
Suppose in the chosen distribution, there exists a cutoff $k$
such that bidders $i=1,...,k$, with mean $m_{i}\ge m_{k}$
are included in the $L$-shape distribution (the ``active'' bidders)
and the remaining bidders $k+1,...,n$ are excluded (the ``inactive''
bidders). Following the reasoning provided in Section 3.2, we postulate
that the lower bound of the support is $\alpha_{k}(\bar{m}_{k})$,
where $\bar{m}_{k}=\frac{1}{k}\sum_{i=1}^{k}m_{i}$ is the
average of the highest $k$ mean constraints (i.e. the mean of
the mean constraints among the active bidders) and $\alpha_{k}(x)$
is the solution of $x=\alpha\left(1-\frac{1}{k}\ln(\alpha)\right)$.
For this to be a consistent with the hypothesis, it must be the case
that $m_{i}>\alpha_{k}(\bar{m}_{k})$ for all $i\leq k$ and
$m_{r}\le\alpha_{k}(\bar{m}_{k})$ for all $r>k$. This reasoning
suggests the cutoff $k$ is determined by 

\begin{align}
k\equiv\min & \left\{ \ell:m_{i}>\alpha_{\ell}(\bar{m}_{\ell})\text{ for }i\leq\ell,m_{r}\leq\alpha_{\ell}(\bar{m}_{\ell})\text{ for }r>\ell\right\} ,\label{eq:asymmetric_cutoff}
\end{align}

For example, if there are $n=3$ bidders with $m_{1}=0.6$, $m_{2}=0.5$,
and $m_{3}=0.1$, the cutoff $k$ will be $2$ since $\bar{m}_{2}=\frac{m_{1}+m_{2}}{2}=0.55$,
$\alpha_{2}(\bar{m}_{2})\approx0.366$, and $m_{1},m_{2}>\alpha_{2}(\bar{m}_{2})$
with $m_{3}<\alpha_{2}(\bar{m}_{2})$. There will be two active bidders,
since the mean of the third bidder's valuation is too low. Note that
if the means of all $n$ bidders are symmetric $m_{1}=...=m_{n}$,
then the cutoff is $k=n$ and all bidders are active. 

We find that given any $n$ bidders with means $\{m_{i}\}_{i=1}^{n},$
we can always find such a cutoff through a simple iterative procedure.
\begin{lem}
\label{lem:asymmetric_cutoff}The cutoff $k$ in (\ref{eq:asymmetric_cutoff})
always exists. 
\end{lem}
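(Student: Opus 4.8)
The plan is to first make sense of the object $\alpha_{\ell}(\cdot)$, then to establish a single sign identity that controls how the threshold moves as one more bidder is added, and finally to read off existence from that identity. As a preliminary step I would fix $\ell$ and study $f_{\ell}(\alpha):=\alpha\bigl(1-\tfrac{1}{\ell}\ln\alpha\bigr)$ on $(0,1]$. A direct computation of $f_{\ell}'$ shows that $f_{\ell}$ is continuous and strictly increasing there, with $f_{\ell}(0^{+})=0$ and $f_{\ell}(1)=1$; hence $f_{\ell}$ is a bijection from $(0,1]$ onto $(0,1]$, so $\alpha_{\ell}(x)$ is a well-defined, strictly increasing inverse, and $\alpha_{\ell}(x)<x$ for every $x\in(0,1)$ (because $1-\tfrac{1}{\ell}\ln\alpha\geq1$). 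Since each $m_{i}\in(0,1)$ we have $\bar{m}_{\ell}\in(0,1)$, so $a_{\ell}:=\alpha_{\ell}(\bar{m}_{\ell})\in(0,1)$ exists for all $\ell$. Writing $M_{\ell}:=\sum_{i=1}^{\ell}m_{i}$ and $\Phi_{\ell}(\alpha):=\ell\alpha-\alpha\ln\alpha=\ell f_{\ell}(\alpha)$, which is strictly increasing on $(0,1]$, the equation defining $a_{\ell}$ becomes simply $M_{\ell}=\Phi_{\ell}(a_{\ell})$.

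The heart of the argument, and the step I expect to be the main obstacle, is that $a_{\ell}$ is \emph{not} monotone in $\ell$: raising $\ell$ increases $\alpha_{\ell}(\cdot)$ for a fixed argument but lowers the argument $\bar{m}_{\ell}$, so the two effects pull in opposite directions and a crude comparison of $a_{\ell}$ with $a_{\ell+1}$ is inconclusive. I would sidestep this by proving the exact identity
\[
\Phi_{\ell+1}(m_{\ell+1})-M_{\ell+1}=\Phi_{\ell}(m_{\ell+1})-\Phi_{\ell}(a_{\ell}),
\]
which follows by substituting $M_{\ell+1}=M_{\ell}+m_{\ell+1}=\Phi_{\ell}(a_{\ell})+m_{\ell+1}$ and cancelling the terms $(\ell+1)m_{\ell+1}-m_{\ell+1}\ln m_{\ell+1}$. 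Because the left-hand side also equals $\Phi_{\ell+1}(m_{\ell+1})-\Phi_{\ell+1}(a_{\ell+1})$, and both $\Phi_{\ell}$ and $\Phi_{\ell+1}$ are strictly increasing, the two sides show that $m_{\ell+1}-a_{\ell+1}$ and $m_{\ell+1}-a_{\ell}$ always carry the same sign. In words: bidder $\ell+1$ lies above the $(\ell+1)$-threshold if and only if it lies above the $\ell$-threshold.

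With this sign relation in hand, existence is immediate. Let $A:=\{\ell:m_{\ell}>a_{\ell}\}$. Since $a_{1}=\alpha_{1}(m_{1})<m_{1}$, we have $1\in A$, so $A\neq\emptyset$. If $\ell\in A$ with $\ell\geq2$, the sign relation applied at step $\ell-1\to\ell$ gives $m_{\ell}>a_{\ell-1}$, and the descending order of the means yields $m_{\ell-1}\geq m_{\ell}>a_{\ell-1}$, so $\ell-1\in A$; hence $A=\{1,\dots,k\}$ with $k:=\max A$. I then claim that $k$ satisfies both requirements in (\ref{eq:asymmetric_cutoff}). The top condition $m_{j}>a_{k}$ for $j\leq k$ reduces, by monotonicity of the means, to $m_{k}>a_{k}$, which holds since $k\in A$. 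For the bottom condition, the case $k=n$ is vacuous; otherwise $k+1\notin A$ means $m_{k+1}\leq a_{k+1}$, and the sign relation at step $k\to k+1$ converts this to $m_{k+1}\leq a_{k}$, so $m_{p}\leq a_{k}$ for all $p>k$ by monotonicity. Thus the feasible set in (\ref{eq:asymmetric_cutoff}) is nonempty and its minimum exists; running the same sign relation in reverse in fact shows this feasible index is unique and equal to $\max A$.
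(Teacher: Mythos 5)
Your proof is correct, and its mathematical core coincides with the paper's: the paper also pins down $k$ by comparing consecutive thresholds, and its key step --- multiplying the defining equations of $\alpha_{k-1}(\bar{m}_{k-1})$ and $\alpha_{k}(\bar{m}_{k})$ by $k-1$ and $k$ and subtracting to reach the contradiction $m_{k}>m_{k}$ --- is exactly your cancellation, just run in inequality form. The difference is in packaging, and yours buys a little more. The paper organizes the argument as a scanning algorithm (increment $i$ until $m_{i}\leq\alpha_{i-1}(\bar{m}_{i-1})$) and then verifies its output, proving only the one-directional implication $m_{k}>\alpha_{k-1}(\bar{m}_{k-1})\Rightarrow m_{k}>\alpha_{k}(\bar{m}_{k})$ by contradiction; you instead isolate the exact identity $\Phi_{\ell+1}(m_{\ell+1})-\Phi_{\ell+1}(a_{\ell+1})=\Phi_{\ell}(m_{\ell+1})-\Phi_{\ell}(a_{\ell})$, which upgrades that implication to a two-sided sign equivalence (the sign of $m_{\ell+1}-a_{\ell+1}$ equals the sign of $m_{\ell+1}-a_{\ell}$). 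From this you get: (i) a uniform existence argument via $A=\{\ell:m_{\ell}>a_{\ell}\}$ being nonempty and downward closed, which in particular handles the case $k=1$ cleanly --- in the paper's write-up the phrase ``otherwise $k-1$ would have been the outputted cutoff'' has no literal meaning when $k=1$, where one must instead invoke $m_{1}>\alpha_{1}(m_{1})$ directly; (ii) uniqueness of the feasible index, which the paper obtains only by taking the minimum in the definition; and (iii) an explicit verification that $\alpha_{\ell}(\cdot)$ is well defined and strictly increasing, which the paper does in the main text rather than inside the lemma's proof. Both arguments are sound; yours is a modest strengthening of the same idea.
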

\begin{proof}
See Appendix \ref{sec:Appendix:proof_asymmetric_cutoff}. 
\end{proof}
Note in addition that $\alpha_{k}(x)$ is well-defined for any $k\in[n]$
and $x\in[0,1]$. To see this, note that when $\alpha=0$, $\alpha\left(1-\frac{1}{k}\ln(\alpha)\right)=0$
and when $\alpha=1$, this is equal to $1$. Since the expression
is continuous and strictly increasing, there exists a unique $\alpha$
such that this equation holds. Explicitly, $\alpha_{k}(\bar{m}_{k})$
is:

\[
\alpha_{k}(\bar{m}_{k})=\exp\left(k+W_{-1}\left(-\frac{k\bar{m}_{k}}{e^{k}}\right)\right),
\]
where $W_{-1}$ is the lower branch of the Lambert $W$ function.
When all $n$ bidders have means equal to $m$, the equation for $\alpha$
becomes

\begin{equation}
m=\alpha\left(1-\frac{1}{n}\ln(\alpha)\right).\label{eq:alpha_symmetric_means}
\end{equation}

\begin{figure}[h]
\caption{\label{fig:cutoff_k}Support of $F^{*}$ with symmetric and asymmetric
mean constraints}

\bigskip{}

\begin{centering}
\includegraphics[width=4.8in]{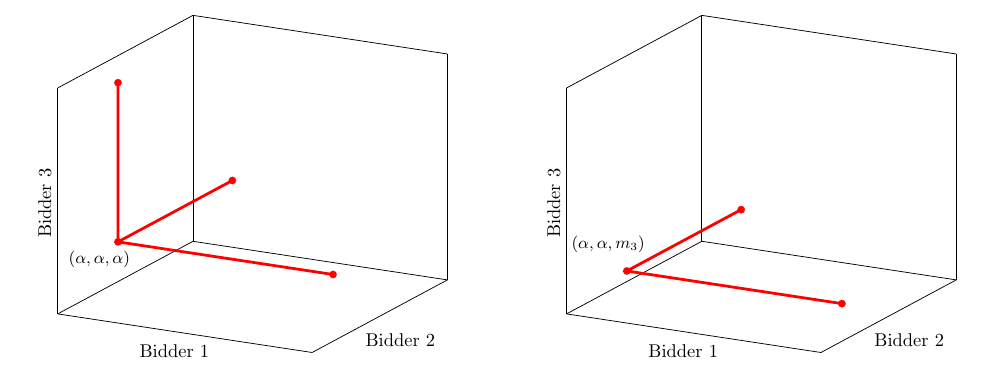}
\par\end{centering}
\bigskip{}

{\scriptsize{}{}{{\scriptsize{}{}Notes:}} On the left panel
is an example of $F^{*}$ with $n=3$ bidders with symmetric mean
constraints. On the right panel is an example of $F^{*}$ with $n=3$
bidders but only $k=2$ active bidders, in which $m_{3}<\alpha_{2}(\bar{m}_{2})$.}{\scriptsize\par}
\end{figure}

\paragraph{Worst-case Distribution.}

With the cutoff $k$ from (\ref{eq:asymmetric_cutoff}) in hand, we
can now construct nature's worst-case distribution $F^{*}$. In $F^{*}$,
each inactive bidder $j\in\{k+1,...,n\}$ draws value $m_{j}$ with
probability one. Each active bidder $i\in\{1,...,k\}$ is selected
to be the highest bidder with probability $\theta_{i}$, in which
case all other active bidders draw valuations $\alpha$ and the highest
bidder's value is distributed according to a CDF $H(v)$. A typical
value profile will be as follows:

\begin{equation}
{\bf v}_{i}(v)\equiv(\underbrace{\alpha,\alpha,...,v,...,\alpha}_{k\text{ active bidders}},\underbrace{m_{k+1},m_{k+2},...,m_{n}}_{n-k\text{ inactive bidders}}),\label{eq:value-profile}
\end{equation}
where bidder $i$ has the highest value of $v\sim H(v)$.

As in the two-bidder case, the highest bidder's valuation is distributed
according to the CDF

\[
H(v)=\begin{cases}
0 & \text{if }v\leq\alpha\\
\frac{v-\alpha}{v} & \text{if }v\in(\alpha,1),\\
1 & \text{if }v=1.
\end{cases}
\]
As with the two bidder case, $H(v)$ ensures that the seller is indifferent
with his choice of reserve price within the support of $H$. 

The probabilities $\theta_{i}$ that an active bidder $i$ is selected
as the highest-value bidder are chosen to satisfy the mean constraints\footnote{The mean constraints for the $n-k$ bidders are satisfied automatically,
since their valuations are always equal to their means.} 

\begin{equation}
(1-\theta_{i})\alpha+\theta_{i}\left(\alpha+\int_{\alpha}^{1}vh(v)dv\right)=m_{i}.\label{eq:bidder_i_mean}
\end{equation}

Solving for $\theta_{i}$, we obtain

\[
\theta_{i}=\frac{m_{i}-\alpha}{\alpha\ln(\frac{1}{\alpha})}.
\]

Naturally, the higher the mean $m_{i}$, the higher the selection
probability $\theta_{i}$. In other words, $\theta_{1}\geq\theta_{2}\geq...\geq\theta_{k}$.
Nature adjusts the selection probabilities to ``soak up'' all the
asymmetries, while ensuring that the value distribution of the highest
bidder is the same (given by $H$) regardless of which bidder is chosen.
In an important sense, Nature eliminates asymmetries from the perspective
of the seller, which ultimately eliminates the need for the seller
to treat bidders differently.

Finally, summing across equations (\ref{eq:bidder_i_mean}) for across
all $k$ active bidders, we obtain the following condition for the
lower bound $\alpha$,

\begin{align*}
\frac{1}{k}\sum_{i=1}^{k}m_{i}=\bar{m}_{k} & =\alpha\left(1-\frac{1}{k}\ln(\alpha)\right),
\end{align*}
which confirms that the lower bound $\alpha$ is indeed equal to $\alpha_{k}(\bar{m}_{k})$.

Altogether, nature's strategy $F^{*}$ is:

\begin{align}
F^{*}(U)\equiv & \sum_{i=1}^{k}\theta_{i}\cdot H(\text{Proj}_{i}(U))\label{eq:nature_strategy}
\end{align}
for any measurable subset $U$ of $\text{supp}(F^{*})=\bigcup_{i\in[k]}\{{\bf v}_{i}(v):v\in(\alpha,1]\}$,
where ${\bf v}_{i}(v)$ are the value profiles defined in (\ref{eq:value-profile}),
$\text{Proj}_{i}:U\to\mathbb{R}$ is the projection map on to the
$i$th bidder's valuation, and (overloading notation) $H$ is the
measure induced by the CDF $H$.

\paragraph{Seller's Strategy.}

The seller's strategy is to randomize his reserve price according
to the CDF:

\begin{equation}
G^{*}(p)=\begin{cases}
\int_{\alpha}^{p}\frac{1}{r\left(k-1-\ln(\alpha)\right)}dr+\frac{k-1}{k-1-\ln(\alpha)} & \text{if }p>\alpha\\
\frac{k-1}{k-1-\ln(\alpha)} & \text{if }p\leq\alpha.
\end{cases}\label{eq:seller_strategy}
\end{equation}
In other words, the seller chooses $p\in[\alpha,1]$ with density
$g(p)=\frac{1}{p\left(k-1-\ln(\alpha)\right)}$ and places a point
mass of $\frac{k-1}{k-1-\ln(\alpha)}$ on the reserve price $p=0$.
As in the two-bidder case, the density is constructed to prevent nature
from deviating from $F^{*}$ by satisfying the conditions of Lemma
\ref{lem:duality}.

Note that from the seller's perspective, the particular asymmetries
between bidders do not matter; the optimal reserve price only depends
on the number of active bidders $k$ and the lower bound $\alpha$.
This is because all the payoff-relevant features remain the same regardless
of which bidder is picked as the winner: the highest bidder's value
is always drawn from $H(v)$ and the second-highest value is always
$\alpha$. Hence, the seller's optimal reserve price is symmetric
across bidders.

The pair of strategies $(G^{*},F^{*})$ satisfy the saddle point condition,
proving that $S_{G^{*}}$ is revenue guarantee optimal among all second-price
auction mechanisms.
\begin{thm}
\label{thm:saddle_point}The strategy profile $(G^{*},F^{*})$ satisfies
the saddle point condition in (\ref{eq:saddle_point}) and therefore
are an equilibrium of the zero sum game. The revenue guarantee is
$\Psi(G^{*},F^{*})=\alpha_{k}(\bar{m}_{k})$. 
\end{thm}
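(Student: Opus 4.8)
The plan is to verify the two saddle-point inequalities in \eqref{eq:saddle_point} separately. The left inequality $\Psi(F^{*},G)\le\Psi(F^{*},G^{*})$ asserts that $G^{*}$ is a best response for the seller, and will follow from the fact that $F^{*}$ was built to make the seller indifferent across reserve prices; the right inequality $\Psi(F^{*},G^{*})\le\Psi(F,G^{*})$ asserts that $F^{*}$ is a best response for nature, and will follow from Lemma~\ref{lem:duality} once a suitable supporting affine function is exhibited. Both computations will simultaneously pin the common value at $\alpha$.

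For the seller's inequality I would compute the revenue $\eta(p;F^{*})$ from a deterministic reserve $p$ under $F^{*}$. Under $F^{*}$ the highest value is always drawn from $H$ while the second-highest value equals $\alpha$ (the other active bidders sit at $\alpha$ and the inactive ones below $\alpha$), so for $p>\alpha$ one gets $\eta(p;F^{*})=p\,(1-H(p))=\alpha$ using $1-H(p)=\alpha/p$, while for $p\le\alpha$ the reserve never binds and $\eta(p;F^{*})=\alpha$. Hence $\eta(\cdot;F^{*})\equiv\alpha$ on $[0,1)$, so $\Psi(F^{*},G)=\int_{0}^{1}\eta(p;F^{*})\,dG(p)\le\alpha$ for every $G$, with equality for any $G$ (in particular $G^{*}$, which assigns zero mass to $\{1\}$) supported on $[0,1)$. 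This gives the left inequality and shows $\Psi(F^{*},G^{*})=\alpha$.

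For nature's inequality I would apply Lemma~\ref{lem:duality}, which generalizes verbatim to asymmetric means because an affine $L$ integrates against any feasible $F$ to the same value. Writing $a:=1/(k-1-\ln\alpha)$, I would conjecture
\[
L(\mathbf{v})=a\sum_{i=1}^{k}v_{i}-a\alpha,
\]
that is, weight $a$ on each active coordinate and zero on the inactive coordinates. The first task is to check $\phi(\mathbf{v};G^{*})=L(\mathbf{v})$ on each segment $S_{i}$: along $S_{i}$ the profile has top value $x$ and second value $\alpha$, and a direct integration gives $\phi(\mathbf{v}_{i}(x);G^{*})=\alpha G^{*}(\alpha)+a(x-\alpha)=a\bigl(x+(k-2)\alpha\bigr)$, which is affine in $x$ with slope $a$ and matches $L(\mathbf{v}_{i}(x))$. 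The second, and main, task is the global inequality $\phi\ge L$. Since $\phi$ depends on $\mathbf{v}$ only through $(v^{(1)},v^{(2)})$ while $L$ depends only on the active coordinates with equal weights, for fixed top-two order statistics $L$ is largest at the symmetric configuration $(v^{(1)},v^{(2)},\dots,v^{(2)})$ on the active coordinates; it therefore suffices to prove a two-variable inequality in $s:=v^{(1)}\ge t:=v^{(2)}$, which I would split by the position of $(s,t)$ relative to $\alpha$. In the region $t\le\alpha$ the inequality reduces to $a(\alpha-s)\ge0$ or to an equality, while in the main region $\alpha\le t\le s$ it reduces to $t\ln(t/\alpha)-t+\alpha\ge0$, which holds because the left side vanishes at $t=\alpha$ and has derivative $\ln(t/\alpha)\ge0$.

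The step I expect to be the main obstacle is this global verification $\phi\ge L$: one must argue carefully that the symmetric configuration is indeed the worst case (handling ties at the second-highest value, and confirming that the inactive bidders, which carry zero weight in $L$, can only slacken the inequality), and then dispatch each order-statistic region. Once $\phi\ge L$ everywhere with equality on $\mathrm{supp}(F^{*})$, Lemma~\ref{lem:duality} yields $\Psi(F^{*},G^{*})\le\Psi(F,G^{*})$ for all feasible $F$; evaluating $\int L\,dF^{*}=a\sum_{i\le k}m_{i}-a\alpha=\alpha$ reconfirms the value. Combining the two inequalities establishes the saddle point \eqref{eq:saddle_point} and $\Psi(F^{*},G^{*})=\alpha$.
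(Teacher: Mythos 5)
Your proposal is correct and follows essentially the same route as the paper: seller-side optimality via the indifference $\eta(p;F^{*})\equiv\alpha$ on $[0,1)$ (with the same care about the atom at $p=1$), and nature-side optimality via Lemma~\ref{lem:duality} with exactly the paper's supporting affine function $L(\mathbf{v})=a\sum_{i\le k}v_i-a\alpha$, reduced to a two-variable comparison in $(v^{(1)},v^{(2)})$. The only differences are organizational -- you absorb the inactive bidders inline (zero weight in $L$, monotonicity of $\phi$) where the paper first projects to a $k$-bidder auction, and you dispatch the main region with the explicit scalar inequality $t\ln(t/\alpha)-t+\alpha\ge0$ where the paper compares derivatives -- neither of which changes the substance of the argument.
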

\begin{proof}
See Appendix \ref{sec:Proof}. 
\end{proof}
We explore several implications of this theorem. First, suppose there
are $n$ bidders with symmetric means $\{m_{i}\}_{i=1}^{n}=m$.\footnote{We consider the symmetric mean case to analyze asymptotics, since
this allows us to take the number of bidders $n\to\infty$, without
worrying about the number of active bidders, which may remain finite
as $n\to\infty$.} Let $G_{n}^{*}$ and $F_{n}^{*}$ be the equilibrium strategy in
this auction. 
\begin{cor}
\label{cor:limit}Given a sequence of auctions with symmetric mean
bidders $\{m_{i}\}_{i=1}^{n}=m$, there exists a sequence of equilibria
$\left(G_{n}^{*},F_{n}^{*}\right)$ such that the random reserve price
$p_{n}\sim G_{n}^{*}$ converges to zero in probability as $n\to\infty$.
Furthermore, the optimal revenue guarantee $\Psi(G_{n}^{*},F_{n}^{*})=\alpha_{n}$
increases in $n$ and converges to $m$ at rate $O(1/n)$. The rent
obtained by the highest value bidder is $\alpha_n \log (1/\alpha_n)$, which
converges to $m \log(1/m)\leq 1/e$.
\end{cor}
\begin{proof}
We define $\alpha_{n}$ to be the solution of $m=\alpha\left(1-\frac{1}{n}\ln(\alpha)\right)$
when there are $n$ bidders and symmetric means equal to $m$. First,
we observe that $\alpha_{n}$ is increasing in $n$. This is because
the RHS of (\ref{eq:alpha_symmetric_means}) is decreasing in $n$,
for any $\alpha\in(0,1)$. Hence, as $n$ increases, the RHS shifts
down. Since RHS is an increasing function of $\alpha$, the $\alpha$
satisfying (\ref{eq:alpha_symmetric_means}) must increase as $n$
increases. Furthermore, as $n$ goes to infinity the RHS converges
to $\alpha$, which implies that $\alpha_{n}$ converges to $m$.

The convergence of the reserve price results from observing that the reserve
price $p_{n}\sim G_{n}^{*}$ with $G_{n}^{*}$ defined in (\ref{eq:seller_strategy})
converges in probability to zero as $n\to\infty$, as for any $\epsilon>0$

\[
\mathbb{P}(p_{n}>\epsilon)=1-G_{n}^{*}\left(\epsilon\right)\leq1-G_{n}^{*}\left(\alpha_{n}\right)=\frac{-\ln(\alpha_{n})}{n-1-\ln(\alpha_{n})}\rightarrow0.
\]

We can see that that since $\alpha_{n}$ solves $m=\alpha_{n}\left(1-\frac{1}{n}\ln(\alpha_{n})\right)$
we have that $m-\alpha_{n}=\frac{\alpha_{n}\ln\left(\frac{1}{\alpha_{n}}\right)}{n}\leq\frac{1}{ne}$,
so the rate of convergence is $O(\frac{1}{n})$.

Finally, the mean of the distribution of the highest-value bidder $H(v)$ can be computed
as $\alpha(1 + \log(1/\alpha))$. The highest bidder pays $\alpha$ and is thus
left with $\alpha \log(1/\alpha)$.

\end{proof}
\begin{cor}
\label{cor:fosd_highest_bidder} Fix two situations with $n$ and
$n'>n$ bidder, where all bidders have the same mean $m$. The distribution
of the highest bidder's value $H_{n'}$ first-order stochastically
dominates $H_{n}$. 
\end{cor}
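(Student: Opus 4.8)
The plan is to reduce the first-order stochastic dominance (FOSD) comparison to a single monotonicity observation. The highest-bidder distribution $H$ depends on the number of bidders only through the lower-bound parameter $\alpha$; writing $\alpha(n)$ for the value solving (\ref{eq:alpha_symmetric_means}) with $n$ symmetric-mean bidders, we have $H_n(v) = 1 - \alpha(n)/v$ on $(\alpha(n),1)$, with an atom at $v=1$ and $H_n(v)=0$ for $v \le \alpha(n)$. By Corollary~\ref{cor:limit}, $\alpha(n)$ is strictly increasing in $n$, so $n' > n$ yields $\alpha(n') > \alpha(n)$. Since for each fixed $v$ the quantity $1 - \alpha/v$ is strictly decreasing in $\alpha$, a larger $\alpha$ pushes the CDF down pointwise, which is precisely the direction needed for FOSD.

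First I would record the consequence of Corollary~\ref{cor:limit}, namely $\alpha(n') > \alpha(n)$. Then I would verify $H_{n'}(v) \le H_n(v)$ for every $v \in [0,1]$ by a short case split at the two cutoffs $\alpha(n) < \alpha(n')$. For $v \le \alpha(n)$ both CDFs vanish; for $\alpha(n) < v \le \alpha(n')$ one has $H_{n'}(v)=0 \le (v-\alpha(n))/v = H_n(v)$; for $\alpha(n') < v < 1$ the explicit formula gives $H_{n'}(v) = 1 - \alpha(n')/v < 1 - \alpha(n)/v = H_n(v)$; and at $v=1$ both equal $1$. The pointwise inequality $H_{n'} \le H_n$ is the definition of FOSD, which closes the argument.

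The only point requiring care --- and the nearest thing to an obstacle --- is that the two distributions have different supports, $[\alpha(n),1]$ versus $[\alpha(n'),1]$, so the comparison is not a uniform rescaling on a common domain. The intermediate region $(\alpha(n),\alpha(n')]$ is exactly where $H_{n'}$ is still zero while $H_n$ has already begun to rise, and confirming that this only strengthens (rather than reverses) the inequality is what the case analysis handles. No duality or optimization machinery is needed here: the result follows directly from the monotonicity of $\alpha(\cdot)$ in $n$ established in Corollary~\ref{cor:limit}, together with the decreasingness of $H$ in its parameter $\alpha$.
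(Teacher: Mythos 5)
Your proposal is correct and follows essentially the same route as the paper: both arguments invoke the monotonicity of $\alpha(n)$ established in the proof of Corollary~\ref{cor:limit} and then verify the pointwise inequality $H_{n'}(v)\leq H_{n}(v)$ by splitting $[0,1]$ at the cutoffs $\alpha(n)<\alpha(n')$ (the paper writes out the difference $H_{n}(v)-H_{n'}(v)$ region by region, which is the same case analysis in slightly different packaging). Your explicit attention to the region $(\alpha(n),\alpha(n')]$, where the supports differ, matches exactly the middle case in the paper's displayed formula.
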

\begin{proof}
The difference $H_{n}(v)-H_{n'}(v)$ for all $v\in[0,1]$ is as follows:

\[
H_{n}(v)-H_{n'}(v)=\begin{cases}
0 & \text{if }v\leq\alpha_{n}\\
\frac{v-\alpha_{n}}{v} & \text{if }v\in(\alpha_{n},\alpha_{n'}]\\
\frac{\alpha_{n'}-\alpha_{n}}{v} & \text{if }v\in(\alpha_{n'},1)\\
0 & \text{if }v=1.
\end{cases}
\]

As shown in the proof of Corollary \ref{cor:limit}, $\alpha_{n}$
is increasing in $n$. Hence, for all $v\in[0,1]$, $H_{n}(v)\geq H_{n'}(v)$. 
\end{proof}
Corollary \ref{cor:limit} says that as the number of bidders grows
large, the equilibrium revenue converges to the mean of a single bidder's
valuation, which implies that the seller extracts a \emph{single}
bidder's surplus in the limit. Additionally, it states that there
exists a sequence of equilibria such that the seller's reserve price
converges to zero in probability, as the number of bidders goes to
infinity (Figure \ref{fig:strategies} shows the effect of increased
competition from $n=2$ to $n=10$). The asymptotic behavior of the
equilibrium reserve price strategy is consistent with the widespread
observation that reserve prices tend to be lower than is prescribed
by the standard (Bayesian) theory under the estimated distribution
of value profiles, as noted by \citet{tamer2003} in timber auctions,
\citet{bajari2003winners} in eBay auctions, and \citet{mcaffee2002}
in real estate auctions.

\begin{figure}[h]
\caption{\label{fig:strategies} Strategies for nature and the seller when
$n=2$ and $n=10$ ($m=\frac{1}{2}$).}

\bigskip{}

\begin{centering}
\hspace{0.1cm}\includegraphics[width=4.8in]{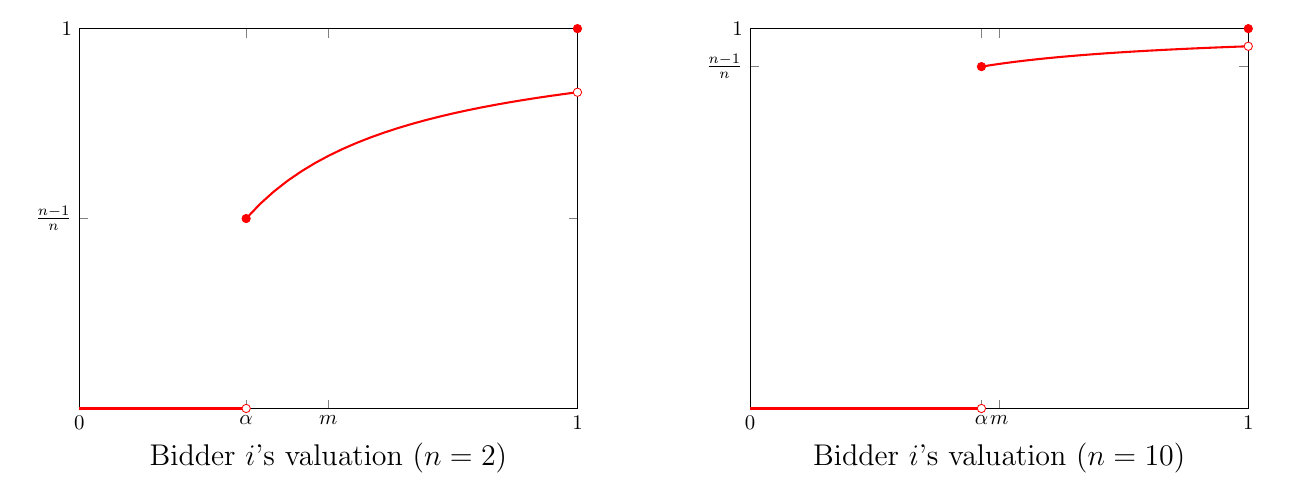} 
\par\end{centering}
\bigskip{}

\begin{centering}
\includegraphics[width=4.84in]{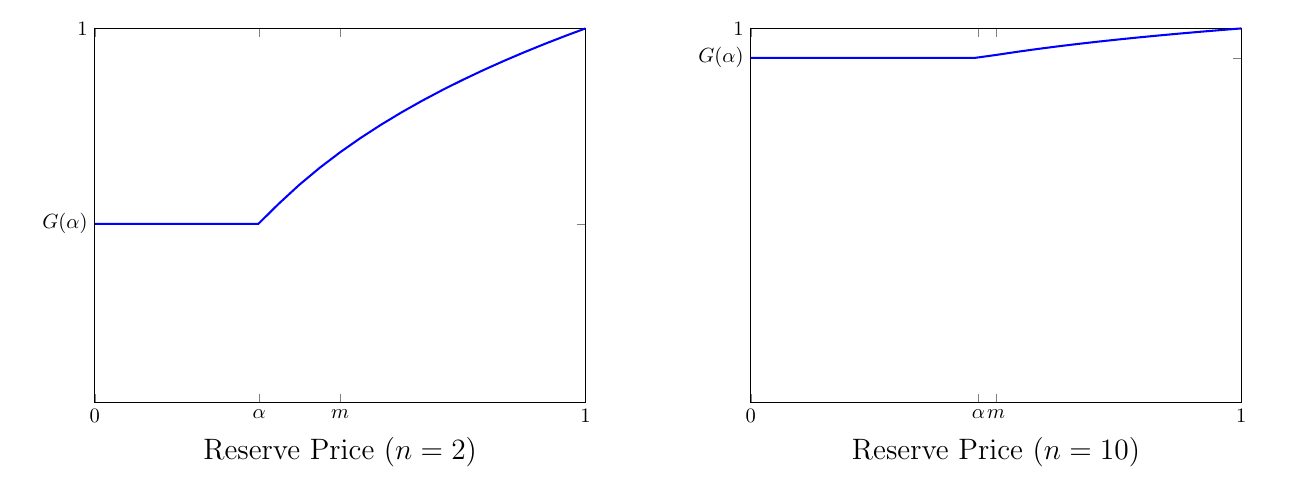} 
\par\end{centering}
\bigskip{}

{\scriptsize{}{}{{\scriptsize{}{}Notes:}} These are the equilibrium
strategies for nature and the seller when $n=2$ and $n=10$ and all
bidders have symmetric means $m=1/2$. In red is the CDF of the marginal
distribution of nature's equilibrium strategy $F^{*}$. In blue is
$G^{*}$, or the CDF of the seller's strategy over reserve price.
When $n=2$, $\alpha\approx0.317$ and when $n=10$, $\alpha\approx0.464$.
Observe that as the number of bidders increases, the seller's mass
on $p=0$ increases towards 1, and $\alpha$ increases towards the
mean $m$.}{\scriptsize\par}

\end{figure}

A key feature of this equilibrium is that the seller randomizes in
reserve price. Since nature can only pick one worst-case distribution,
randomization allows the seller to hedge against nature. Randomization
protects the seller against states of the world that are adverse for
a particular reserve price, and it forces nature to minimize revenue
across multiple value realizations.

As the number of bidders increases however, the role of reserve price
disappears, as the probability of binding reserve prices vanishes.
In essence, competition between bidders appears substitutes for a
reserve price. The intuition of this result is that competition constrains
the seller's worst case belief: as the number of bidders increases,
nature is compelled to increase the value of the second-highest bidder
$\alpha$ to maintain the mean constraint. Furthermore, the marginal
distribution of each bidder converges to a point mass on $\alpha$
(as each bidder is the second-highest bidder with probability $\frac{n-1}{n}$).
In the limit, the worst-case joint distribution becomes the point
mass distribution on the value profile consisting of $\alpha$ for
all bidders, which itself converges to $m$. The optimal strategy
for the seller facing a point mass on ${\bf m}=(m,...,m)$ is to charge
a reserve price of zero.

\section{\label{sec:Optimal-Competitive-Mechanism}Optimal Competitive Mechanism}

In this section, we explore the performance of the optimal second-price
auction within a broad class of mechanisms and show that the second-price
auction with an optimal reserve price is revenue-guarantee optimal
within this class.

We begin with some definitions. Given a fixed value distribution $F$,
we say that a realized bidder-type $(i,v_{i})$ is \emph{clearly low}
if (1) with probability $1$ there exists some bidder $j\neq i$ with a
higher valuation than $v_i$ and (2) conditional on bidder $j$ having a higher
valuation, bidder $i$'s distribution is a point mass on $v_{i}$.
Formally, we define the set of clearly low bidder-values to be:

\[
\xi(F)=\{(i,v):\exists j\neq i\text{ s.t. }\mathbb{P}_{F}(v_{j}>v|v_{i}=v)=1\text{ and }\mathbb{P}_{F}(v_{i}=v|v_{j}>v)=1\}.
\]

The event has a clear implication for what bidders know when they
have a common knowledge about the value distribution $F$. In that
case, whenever a bidder realizes a clearly low value $v$, it becomes
common knowledge between $i$ and some bidder $j$ (whose identity
may be unknown to $i$) that bidder $j$ has a strictly higher value
than $v$ and bidder $i$ has exactly $v$.

We say that a bidder-type \emph{never wins at}\textbf{ $\Gamma(M,I,F)$}
if he is assigned the good with probability zero in every Bayes-Nash
equilibrium in undominated strategies of the game induced by $(M,I,F)$.
The class of \emph{competitive} mechanisms is defined as:

\[
\widehat{\mathcal{M}}=\{M\in\mathcal{M}:\forall F\in\mathcal{P}(V),(i,v)\in\xi(F)\text{ never wins at }\Gamma(M,I_{0},F)\}.
\]

This definition imposes a very mild restriction on the set of possible
mechanisms. First, the definition rules out behavior of the mechanism
only under the UCP information structure $I_{0}$.
Second, it only restricts allocation in rather atypical realizations
of the value distribution when two bidders both know that one has
a higher value than the other with complete certainty (and the higher-value
bidder must know the lower-value bidder's value exactly). These realizations
do not exist in many typical value distributions, e.g. if values are
drawn iid from a non-degenerate distribution, and no restriction is imposed for
such distributions.

Given the mildness of this restriction, the class of competitive
mechanisms encompasses many standard auction mechanisms without reserve
prices or with a common reserve price. This includes all efficient
mechanisms (which allocate the good to the highest value bidder) but
also includes some inefficient mechanisms, such as the first-price
auction.
\begin{lem}
\label{fpa-competitive}The second-price auction and the first-price
auction, both possibly with a common (possibly random) reserve price,
are competitive mechanisms.
\end{lem}
\begin{proof}
See Appendix \ref{sec:Appendix:proof_fpa}.
\end{proof}
Evidently, the class of competitive mechanisms is not the largest
class of mechanisms for which the second-price auction is revenue-guarantee
optimal. For example, lottery mechanisms (which give away the good
to a random bidder) or an all-pay auction with reserve prices uniformly
above or below $\alpha$ are not competitive mechanisms but are still
revenue dominated by the optimal second-price auction given $F^{*}$.
Rather, we focus on this particular class of mechanisms since it prescribes
a natural and weak restriction on admissible mechanisms.

Our main result is that the optimal second-price auction found in
section \ref{sec:second-price-auction} obtains the optimal revenue
guarantee within the class of competitive mechanisms. 
\begin{thm}
\label{thm:optimal_mechanism}The second-price auction mechanism $S_{G^{*}}$
with the optimal reserve price $G^{*}$ in (\ref{eq:seller_strategy})
obtains the optimal revenue guarantee out of all competitive mechanisms.
In other words, for any $M\in\widehat{\mathcal{M}}$ and $(F,I)\in\mathcal{P}(V,\{m_{i}\}_{i=1}^{n})\times\mathcal{I}$.

\[
\mathcal{R}(M,I_{0},F^{*})\leq\mathcal{R}(S_{G^{*}},I_{0},F^{*})\leq\mathcal{R}(S_{G^{*}},I,F),
\]
where $F^{*}$ is the worst-case distribution in (\ref{eq:nature_strategy})
and $I_{0}$ is the UCP information structure.
These inequalities imply:

\[
S_{G^{*}}\in\arg\max_{M\in\hat{M}}\inf_{F,I}\mathcal{R}\left(F,I,M\right).
\]
\end{thm}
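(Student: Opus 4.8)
The plan is to treat the statement as a saddle point over the enlarged strategy space (the seller may now pick any competitive mechanism, nature picks $(F,I)$) and to verify the two inequalities separately. The right-hand inequality $\mathcal{R}(F^*,I_0,S_{G^*})\leq\mathcal{R}(F,I,S_{G^*})$ I would obtain directly from Theorem \ref{thm:saddle_point}: since $S_{G^*}$ is a second-price auction and undominated play makes the information structure irrelevant, $\mathcal{R}(F,I,S_{G^*})=\Psi(F,G^*)$ for every $I$, and the right inequality of the second-price saddle point $\Psi(F^*,G^*)\leq\Psi(F,G^*)$ gives the claim, together with $\mathcal{R}(F^*,I_0,S_{G^*})=\Psi(F^*,G^*)=\alpha$.

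The substance is the left-hand inequality $\mathcal{R}(F^*,I_0,M)\leq\alpha$ for every competitive $M$. First I would identify the non-competitive types under $F^*$. For an active bidder $i$ the type $v_i=\alpha$ is non-competitive, since whenever $i$ is unselected some other active bidder carries value $x>\alpha$; and each inactive bidder's degenerate type $m_j\leq\alpha$ is non-competitive, since the selected active bidder always strictly exceeds it. The competitive constraint thus forces the interim allocation $X_i(\alpha)=0$ for active bidders and $X_j(m_j)=0$ for inactive ones, where $X_i(v_i):=E_{v_{-i}}[x_i(v_i,v_{-i})\mid v_i]$; by (IR) the inactive bidders then contribute non-positive expected payment and may be dropped. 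Next, under $I_0$ and $F^*$, conditioning active bidder $i$ on any value $x>\alpha$ fixes the whole profile at $\mathbf{v}_i(x)$ from (\ref{eq:nature_strategy}), so the interim allocation coincides with the ex-post one, $X_i(x)=x_i(\mathbf{v}_i(x))$.

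With these facts I would run the envelope argument bidder by bidder. Writing $E[T_i]=E[v_iX_i(v_i)]-U_i(\alpha)-E\!\left[\int_\alpha^{v_i}X_i(s)\,ds\right]$ and discarding $U_i(\alpha)\geq0$, I would evaluate the two remaining terms against the law of $v_i$ (density $\theta_i h$ on $(\alpha,1)$ plus mass $\theta_i\alpha$ at $1$). The decisive feature is that $H$ is an equal-revenue distribution: $1-H(v)=\alpha/v$ and $h(v)=\alpha/v^2$ yield virtual value identically zero on $(\alpha,1)$, so after applying Fubini the two integral terms cancel exactly and leave $E[T_i]\leq\theta_i\alpha X_i(1)$. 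Summing over the active bidders and using feasibility $x_i(\mathbf{v}_i(1))\leq1$ at each profile $\mathbf{v}_i(1)$ together with $\sum_i\theta_i=1$ gives $\mathcal{R}(F^*,I_0,M)\leq\alpha\sum_i\theta_iX_i(1)\leq\alpha=\mathcal{R}(F^*,I_0,S_{G^*})$, completing the inequality.

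I expect the main obstacle to be the bookkeeping in the envelope computation, in particular handling the two mass points correctly (the one at $\alpha$, annihilated by the competitive constraint, and the one at $1$, which must be carried through the Fubini step) so that the cancellation forced by the zero virtual value is genuinely exact rather than merely approximate. A secondary subtlety is justifying that conditioning on $v_i=x>\alpha$ collapses the profile deterministically to $\mathbf{v}_i(x)$ and that every asserted type is truly non-competitive, including the degenerate boundary $k=1$ where no active bidder ever realizes value $\alpha$ and the revenue accounting must be handled separately.
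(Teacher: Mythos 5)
Your proposal is correct and takes essentially the same approach as the paper: the right-hand inequality follows from Theorem \ref{thm:saddle_point} plus the irrelevance of the information structure under the second-price auction, and the left-hand inequality follows from the envelope formula, the competitive constraint forcing $X_i(\alpha)=0$ for active bidders, and the equal-revenue structure of $H$ collapsing each bidder's expected transfer to $\theta_i\alpha X_i(1)$, which sums to at most $\alpha$ since $\sum_i\theta_i=1$ and $X_i(1)\leq 1$. The only differences are bookkeeping: you fold the atom at $1$ into the survival function $1-H$ (getting virtual value identically zero), whereas the paper separates that atom and obtains virtual value $v^2$ on the continuous part before the same exact cancellation, and your explicit treatment of $U_i(\alpha)\geq 0$, the inactive bidders' IR, and the $k=1$ edge case is, if anything, slightly more careful than the paper's.
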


\begin{proof} See Appendix \ref{sec:Appendix:proof_theorem_2}.
\end{proof}

As mentioned before, although we restrict ourselves to the set of
competitive mechanisms, this by no means captures all of the mechanisms
within which $S_{G^{*}}$ revenue dominates, given $(F^{*},I_{0})$.
In fact, we can observe from the proof that $S_{G^{*}}$ revenue-guarantee
dominates any mechanism for which $\mathcal{R}(M,I_{0},F^{*})\leq\alpha$.

At the same time, it is not true that $S_{G}$ is optimal in the fully
unrestricted class of mechanisms. When facing $(F^{*},I_{0})$, the
seller can do strictly better with a mechanism that is not competitive. 
\begin{example}
Consider a setting with two bidders with identical means $m\in(0,1)$,
with valuations distributed by $F^{*}$. 

Consider the following mechanism. Both bidders are asked to report
who the clearly low bidder is (the bidder who draws $\alpha$) and
a bidder who reports himself not to be reports his value. If there
is no consensus on the clearly low bidder, then neither bidder gets
the good and there is no payment. If bidder $i$ is a consensus low
bidder, then bidder $j \neq i$ gets the good if and only if his value is
strictly above $1-\epsilon$ at price $1-\epsilon$. Otherwise, the
low bidder gets the good at price $\alpha-\epsilon$. For $\epsilon>0$
sufficiently small, it is a weak dominant strategy for the low bidder
to report truthfully. Further, it is a best response for the high
bidder to report truthfully about the other bidder being clearly low
and about true value. Indeed, the strategy is undominated. 

Hence, the expected revenue from the equilibrium is: 

\begin{align*}
\sum_{i=1}^{2}\mathbb{E}_{F^{*}}[t_{i}({\bf v})] & =\frac{1}{2}\sum_{i=1}^{2}\left[(1-H(1-\epsilon))(1-\epsilon)+H(1-\epsilon)(\alpha-\epsilon)\right]\\
 & \geq\alpha(1-\epsilon)+\frac{1-\epsilon-\alpha}{1-\epsilon}(\alpha-\epsilon).
\end{align*}
As $\epsilon\to0$, the RHS converges to $\alpha+(1-\alpha)\alpha>\alpha$
so for $\epsilon$ small enough the revenue exceeds $\alpha$.

We can make a few remarks about the mechanism. First, the above mechanism
is implausible as it involves extreme discrimination and is very sensitive
to the particular structure of the value distribution. It is never
robustly optimal itself. Nature could simply choose the two point distribution
placing mass at $\alpha-2\epsilon$ and some $z\in(m,1-\epsilon)$, which
would admit an equilibrium in which no sale arises. The second-price
auction does not suffer from such instabilities and does not discriminate
between bidders even if the seller has ex-ante knowledge of asymmetries
between the bidders.

Second, the above mechanism is also susceptible to collusion by bidders.
Facing the mechanism, the bidders would find it optimal to coordinate
their reports so that the high bidder is the low bidder and the low
bidder reports less than $1-\epsilon$ for his value, so that the
high value bidder always gets the good at price $\alpha-\epsilon$.
The bidder surplus would be sufficiently high so that such a side contract
can be mutually beneficial (and incentive compatible since they know
who the low bidder is as common knowledge). By contrast, the second
price auction does not leave any room for collusion, since the allocation
is efficient among the bidders, conditional on it being allocated
to somebody.
\end{example}

\section{\label{sec:Conclusion}Conclusion}

We have studied the seller's optimal selling mechanism to multiple
potential buyers, subject to moment conditions of the distribution
of bidders' values. We have identified the worst-case distribution
chosen by nature and the optimal reserve price strategy of the seller
as the equilibrium of a zero-sum, simultaneous move game. We have
found that among a wide class of Bayesian mechanisms, a second-price
auction with a random, symmetric reserve price obtains the optimal
revenue guarantee even in the presence of a priori asymmetric means
for the bidders. In addition, we have shown that when bidders have identical means
and as the number of bidders increases,
the optimal second-price auction involves a non-binding reserve price
and the revenue guarantee converges to the best possible revenue guarantee
at rate $O(1/n)$.

\bibliographystyle{aer}
\bibliography{bib_nash_robust}

\appendix
%dummy comment inserted by tex2lyx to ensure that this paragraph is not empty

\section{Proofs}

\subsection{Proof of ``Only if'' of Lemma \ref{lem:duality}}
\begin{proof}
To prove the converse of Lemma \ref{lem:duality}, we construct the
dual of nature's minimization problem in (\ref{eq:nature_program-1}):

\begin{align*}
\max_{\gamma_{i},\eta} & \sum_{i=1}^{n}\gamma_{i}\cdot m+\eta\\
\text{s.t. } & \sum_{i=1}^{n}\gamma_{i}v_{i}(\mathbf{v})+\eta\leq\phi(\mathbf{v};G),\text{ }\forall\mathbf{v}\in V.
\end{align*}
The dual variables $\gamma_{i}$ and $\eta$ together belong in $\mathbb{R}^{n+1}$.
This representation follows the dual of the semi-infinite linear program
in \citet{AndersonNash}.

The primal is bounded above by $1$, and the measure $F=\delta_{\mathbf{m}}$,
which assigns all mass to the mean point $\mathbf{m}=(m,m,...,m)$,
is in the interior of the primal cone. By Theorem 3.12 in \citet{AndersonNash},
strong duality holds.

The dual solution $(\gamma^{*},\eta^{*})$ defines an affine function
$L(\mathbf{v})=\left\langle \gamma^{*},\mathbf{v}\right\rangle +\eta^{*}$.
Given the primal solution, we can rewrite the dual in terms of this
affine function.

\begin{align*}
\max_{L} & \int_{V}L(\mathbf{v})dF\\
\text{s.t. } & L(\mathbf{v})\leq\phi(\mathbf{v};G),\text{ }\forall\mathbf{v}\in V.
\end{align*}

The constraint of the dual guarantees that the affine function is
below the expected revenue function. There is no duality gap, which
means:

\[
\int_{V}L(\mathbf{v})dF^{*}=\int_{V}\phi(\mathbf{v};G)dF^{*}.
\]

Therefore, points at which $L(\mathbf{v})<\phi(\mathbf{v};G)$ must
only occur on a measure zero set with respect to $F^{*}$. By the
definition of the support, we obtain:

\[
\forall\mathbf{v}\in\text{supp}(F^{*}),\text{ }L(\mathbf{v})=\phi(\mathbf{v};G).
\]
\end{proof}

\subsection{\label{sec:Appendix:proof_asymmetric_cutoff}Proof of Lemma \ref{lem:asymmetric_cutoff}}
\begin{proof}
We can use a straightforward algorithm to find $k$.

First, we begin with $i:=2$, the bidder with the second-highest mean
constraint. We calculate $\bar{m}_{i-1}$ and $\alpha_{i-1}(\bar{m}_{i-1})$
and check whether $m_{i}\leq\alpha_{i-1}(\bar{m}_{i-1})$. If so,
then we set $k=i-1$. If not, then we increment the index $i:=i+1$
and repeat the same procedure. If $i$ reaches $n+1$, then $k=n$.

There are finitely many bidders so this algorithm will terminate.
Since $k$ is returned by the algorithm only if $m_{k+1}\leq\alpha_{k}(\bar{m}_{k})$
(or if $k$ is the last bidder), for all bidders with index $p$ greater
than $k$, $m_{p}\leq m_{k+1}\leq\alpha_{k}(\bar{m}_{k})$.

It remains to show that $m_{j}>\alpha_{k}(\bar{m}_{k})$ for all $j\leq k$.
It must be true that $m_{k}>\alpha_{k-1}(\bar{m}_{k-1})$, otherwise
$k-1$ would have been the outputted cutoff. It suffices to show that
$m_{k}>\alpha_{k}(\bar{m}_{k})$. Suppose to the contrary that:

\[
\alpha_{k-1}(\bar{m}_{k-1})<m_{k}\leq\alpha_{k}(\bar{m}_{k}).
\]

The RHS of the equation $x=\alpha\left(1-\frac{1}{\ell}\ln(\alpha)\right)$
is increasing in $\alpha$, which means that for the equation satisfied
by $\alpha_{k-1}(\bar{m}_{k-1})$ (shortening notation to $\alpha_{k-1}$):

\begin{equation}
\bar{m}_{k-1}=\alpha_{k-1}\left(1-\frac{1}{k-1}\ln(\alpha_{k-1})\right)<m_{k}\left(1-\frac{1}{k-1}\ln(m_{k})\right),\label{eq:mk-1}
\end{equation}
and likewise

\begin{equation}
\bar{m}_{k}=\alpha_{k}\left(1-\frac{1}{k}\ln(\alpha_{k})\right)\geq m_{k}\left(1-\frac{1}{k}\ln(m_{k})\right).\label{eq:mk}
\end{equation}

We can multiply (\ref{eq:mk-1}) by $k-1$ and (\ref{eq:mk}) by $k$
and subtract the former from the latter to obtain:

\begin{align*}
k\bar{m}_{k}- & (k-1)\bar{m}_{k-1}>m_{k}\\
 & m_{k}>m_{k},
\end{align*}
which is a contradiction. Hence for all $j\leq k$, $m_{j}\geq m_{k}\geq\alpha_{k}(\bar{m}_{k})$.

The $k$ returned by the algorithm satisfies both conditions required
for the cutoff and furthermore, it will be the minimum index that
does so since we iterate from the smallest index $1$. 
\end{proof}

\subsection{\label{sec:Proof}Appendix: Proof of Theorem \ref{thm:saddle_point}}

To prove Theorem \ref{thm:saddle_point}, we will first show in Section
\ref{subsec:Seller's-problem} that the seller's strategy $G^{*}$
is optimal given $F^{*}$. In Section \ref{subsec:Nature's-problem},
we will prove that nature's strategy $F^{*}$ is optimal given $G^{*}$.

\subsubsection{\label{subsec:Seller's-problem}Seller's problem}

First, we show that for the $F^{*}$ described above, $G^{*}$ is
a best response by the seller. In other words, we will show the following
inequality holds for all $G\in\mathcal{P}\left([0,1]\right)$:

\[
\Psi(G,F^{*})\leq\Psi(G^{*},F^{*}).
\]

\begin{proof}
Recall that $F^{*}$ has the following form:

\[
F^{*}(U)=\sum_{i=1}^{k}\theta_{i}\mathbf{1}_{S_{i}}\cdot H(\textrm{Proj}_{i}(U)),
\]
where $k$ is the cutoff index from Lemma \ref{lem:asymmetric_cutoff},
which determines the number of active bidders. The distribution of
the highest bidder's value $H(v)$ is:

\[
H(v)=\begin{cases}
0 & \text{if }v\leq\alpha\\
\frac{v-\alpha}{v} & \text{if }v\in(\alpha,1)\\
1 & \text{if }v=1.
\end{cases}
\]

Using the same notation as in the two bidder case, we let $\eta(p;F)$
denote the seller's expected revenue under reserve price $p$ given
nature's strategy $F.$

\[
\eta(p;F):=\int_{V}\max\{v^{(2)},p\}\cdot1_{\{v^{(1)}>p\}}dF.
\]

Similar to the two bidder example, we will proceed to show that our
candidate $F^{*}$ makes the seller indifferent across all reserve
prices (at least in $[0,1)$). Under $F^{*}$, if the seller sets
the reserve price below $\alpha$, a sale will always occur and the
seller will receive the second highest value, $\alpha$. If the seller
sets the reserve price above $\alpha$, then he will make a sale with
probability $H(p)$ and receive the reserve price $p$ as revenue.
Hence, the seller's payoff under $F^{*}$ from a given reserve price
is:

\begin{align*}
\eta(p;F^{*}) & =\begin{cases}
\sum_{i=1}^{k}\theta_{i}\cdot p(1-H(p)) & \text{if }p>\alpha\\
\alpha & \text{if }p\leq\alpha
\end{cases}\\
 & =\begin{cases}
p(1-H(p)) & \text{if }p>\alpha\\
\alpha & \text{if }p\leq\alpha.
\end{cases}
\end{align*}

Note that the $n-k$ inactive bidders do not affect the seller's expected
revenue, as the second-highest value $\alpha$ is always greater than
their valuations. Note also that the number of bidders and their selection
probabilities $\theta_{i}$ only affect the seller through their impact
on $\alpha$. With the exception of $\alpha$, the expected revenue
to the seller $\eta$ for any reserve price $p$ is exactly the same
as in the two-bidder, symmetric mean example.

For any reserve price $p\in[0,1)$, the revenue to the seller will
be:

\[
p(1-H(p))=p\left(1-\frac{p-\alpha}{p}\right)=\alpha.
\]

When $p=1$, the revenue may be smaller, depending on the tie breaking
rule. Observe however that $G^{*}$ does not put any mass at 1, so

\[
\Psi(G^{*},F^{*})=\alpha\geq\Psi(G,F^{*}),
\]
for any $G\in\mathcal{P}\left([0,1]\right)$. Hence, $G^{*}$ in Theorem
\ref{thm:saddle_point} is a best response. 
\end{proof}

\subsubsection{\label{subsec:Nature's-problem}Nature's problem}

Now we will prove that for the $G^{*}$ in Theorem \ref{thm:saddle_point},
$F^{*}$ is a best response by nature. We will show that for any $F\in\mathcal{P}(V,m)$,

\[
\Psi(G^{*},F^{*})\leq\Psi(G^{*},F).
\]

\paragraph{Equivalence with a $k$-bidder auction.}

Since the $n-k$ inactive bidders do not affect the revenue, we can
prove the saddle point inequality using the $F^{*}$ distribution
containing only the $k$ active bidders, which simplifies the problem.
Consider a new auction $(V_{k},\{m_{i}\}_{i=1}^{k})$, in which we
include only the $k$ active bidders in the ``L''-shape joint distribution
of $F^{*}$.

For any distribution of value profiles in the original $n$-bidder
auction $F\in\mathcal{P}(V,\{m_{i}\}_{i=1}^{n})$, we can project
this distribution on the $k$ active bidders and obtain $F_{k}\in\mathcal{P}(V_{k},\{m_{i}\}_{i=1}^{k})$.
This distribution is exactly identical to $F$ for the $k$ active
bidders, but no longer includes the $n-k$ inactive bidders.

We see that from nature's point of view, removing these bidders is
without loss.
\begin{lem}
$\Psi(G^{*},F^{*})=\Psi(G^{*},F_{k}^{*})\text{ and }\Psi(G^{*},F_{k})\leq\Psi(G^{*},F).$
\end{lem}
\begin{proof}
For any general $F$ and $F_{k}$, $\Psi(G,F)\geq\Psi(G,F_{k})$ for
any reserve price $G$. All else being equal, including more bidders
can only increase the revenue for the seller. Moreover, under $F^{*}$
all $n-k$ inactive bidders draw valuations below $\alpha$ so the
revenue is identical whether the bidders are included or removed entirely.
\end{proof}
With these in hand, it suffices to show that $\Psi(G^{*},F_{k}^{*})\leq\Psi(G^{*},F_{k})$
in order to prove that $\Psi(G^{*},F^{*})\leq\Psi(F,G^{*})$.

\paragraph{Saddle point inequality in the $k$-bidder auction.}

Now we will show that $F_{k}^{*}$ is a best response among distributions
$F_{k}\in\mathcal{P}(V_{k},\{m_{i}\}_{i=1}^{k})$:

\[
\Psi(G^{*},F_{k}^{*})\leq\Psi(G^{*},F_{k})
\]

\begin{proof}
Recall that $F_{k}^{*}$ follows the ``L''-shape distribution for
the $k$ bidders. In other words, $\text{supp}(F_{k}^{*})$ consists
of value profiles ${\bf v}_{i}^{k}(x)$, in which $i$th bidder's
value is some $x\in[\alpha,1]$, and the values of the remaining $k-1$
bidders are set to $\alpha$.

We use Lemma \ref{lem:duality} to verify the optimality of $F_{k}^{*}$.
Recall that Lemma \ref{lem:duality} states that if there exists an
affine function $L(\mathbf{v})$ such that $L(\mathbf{v})\leq\phi(\mathbf{v};G^{*})$
for all ${\bf v}\in V$, and in addition, $L({\bf v})=\phi(\mathbf{v};G^{*})$
on the support of $F_{k}^{*}$, then $F_{k}^{*}$ is an optimal strategy
for nature. For our purpose, consider the following affine function:

\[
L(\mathbf{v})=\sum_{i=1}^{k}\frac{1}{k-1-\ln(\alpha)}v_{i}-\frac{\alpha}{k-1-\ln(\alpha)}.
\]

We will first show that $\phi(\mathbf{v};G^{*})=L(\mathbf{v})$ for
all $\mathbf{v}\in\text{supp}(F_{k}^{*})$. First, for any realization
$\mathbf{v}$, it is without loss to reorder the bidders' valuations
in ${\bf v}$ from highest to lowest $v_{1}\geq v_{2}\geq...\geq v_{k}$.
We can write:

\begin{align*}
\phi(\mathbf{v};G^{*}) & =\int_{0}^{1}\max\{v_{2},p\}\cdot1_{\{v_{1}>p\}}dG^{*}\\
 & =v_{2}G^{*}(v_{2})+\int_{v_{2}}^{v_{1}}rg^{*}(r)dr\\
 & =v_{2}G^{*}(v_{2})+\frac{v_{1}-v_{2}}{k-1-\ln(\alpha)}.
\end{align*}

Recall that the candidate strategy for the seller $G^{*}$ is:\footnote{While the current equilibrium puts mass of $\frac{k-1}{k-1-\ln(\alpha)}$
on $p=0$, there could be other equilibria in which the seller may
put density on $[0,\alpha]$. All of these equilibria are payoff equivalent
but may not yield the same behavior.}

\[
G^{*}(p)=\begin{cases}
\int_{\alpha}^{p}g^{*}(r)dr+\frac{k-1}{k-1-\ln(\alpha)} & \text{if }p>\alpha\\
\frac{k-1}{k-1-\ln(\alpha)} & \text{if }p\leq\alpha,
\end{cases}
\]
where

\[
g^{*}(r)=\begin{cases}
\frac{1}{r\left(k-1-\ln(\alpha)\right)} & \text{if }r>\alpha\\
0 & \text{if }r\leq\alpha.
\end{cases}
\]

Take any point ${\bf v}_{i}^{k}(x)$ in the ``L-shape'' support.
It is without loss to consider ${\bf v}_{1}^{k}(x)=(x,\alpha,\alpha,...,\alpha)$,
as the following holds symmetrically in all ${\bf v}_{i}^{k}(x)$:

\begin{align*}
\phi(x,\alpha,...,\alpha;G^{*}) & =\alpha G^{*}(\alpha)+\int_{\alpha}^{x}rg^{*}(r)dr\\
 & =\frac{(k-1)\alpha}{k-1-\ln(\alpha)}+\frac{x}{k-1-\ln(\alpha)}-\frac{\alpha}{k-1-\ln(\alpha)}\\
L(x,\alpha,...,\alpha) & =\frac{(k-1)\alpha}{k-1-\ln(\alpha)}+\frac{x}{k-1-\ln(\alpha))}-\frac{\alpha}{k-1-\ln(\alpha)}.
\end{align*}

Hence, $L$ and $\phi$ are identical (or intersect) on the support
of $F^{*}$.

Now we will show that for all $\mathbf{v}\in V$, $\phi(\mathbf{v};G^{*})\geq L(\mathbf{v})$.
We define the following functions in order to simplify the analysis:

\begin{align*}
\varphi(u,q) & :=\phi(u,q,q,...,q;G^{*}),\\
\ell(u,q) & :=L(u,q,q,...,q).
\end{align*}
$\varphi(u,q)$ and $\ell(u,q)$ coincide with $\phi$ and $L$ respectively
when the highest bidder's value is $u$ and the values of the other
bidders are equal to $q\leq u$.

Since $\phi$ only depends on the highest and second highest bidder,
the realization of $\phi$ will not be altered if the rest of the
valuations $v_{3},...,v_{n}$ were set to the second highest bidder's
value $v_{2}$. Hence,

\[
\phi(\mathbf{v})=\varphi(v_{1},v_{2}).
\]

Additionally, since $L$ is increasing in each coordinate,

\[
\ell(v_{1},v_{2})\geq L(\mathbf{v}).
\]

These functions allow us to compare $\phi$ and $L$ at various value
profiles without having to specify $v_{3},...,v_{k}$.

There are three separate cases to consider in order to show $\phi\geq L$:
(1) the second highest bidder's value is above the lower bound $\alpha$,
(2) the highest bidder's value is above $\alpha$ but the second highest's
is below $\alpha$, and (3) all values are below $\alpha$.

{\bf Case (1):} $v_{2}>\alpha$.

Observe first that for any $u>q$,

\[
\frac{\partial\varphi}{\partial v_{2}}(u,q)=G^{*}(q).
\]

For any ${\bf v}$ such that $v_{2}>\alpha$,

\begin{align*}
L({\bf v})\leq\ell(v_{1},v_{2}) & =\int_{0}^{v_{2}}\frac{\partial\ell(v_{1},q)}{\partial q}dq=\int_{0}^{v_{2}}\frac{k-1}{k-1-\ln(\alpha)}dq=\int_{0}^{v_{2}}G^{*}(0)dq\\
\phi(\mathbf{v};G^{*})=\varphi(v_{1},v_{2}) & =\int_{0}^{v_{2}}\frac{\partial\varphi(v_{1},q)}{\partial q}dq=\int_{0}^{v_{2}}G^{*}(q)dq,
\end{align*}
where $G^{*}(0)=\frac{k-1}{k-1-\ln(\alpha)}$ refers to the mass
of the atom at $p=0$. Since $v_{2}>\alpha$, $G^{*}(v_{2})\geq G^{*}(\alpha)=G^{*}(0)$
and therefore $\phi(\mathbf{v};G^{*})=\varphi(v_{1},v_{2})\geq\ell(v_{1},v_{2})\geq L(\mathbf{v})$.

{\bf Case (2):} $v_{1}\geq\alpha\geq v_{2}$.

For $v_{2}$ less than $\alpha$, $G^{*}(v_{2})=G^{*}(0)=\frac{k-1}{k-1-\ln(\alpha)}$,
since $G^{*}$ does not have any mass in $(0,\alpha)$. Hence,

\begin{align*}
L({\bf v})\leq\ell(v_{1},v_{2}) & =\frac{v_{1}}{k-1-\ln(\alpha)}+\frac{(k-1)v_{2}}{k-1-\ln(\alpha)}-\frac{\alpha}{k-1-\ln(\alpha)}\\
\phi({\bf v};G^{*})=\varphi(v_{1},v_{2}) & =v_{2}G^{*}(0)+\frac{(v_{1}-\alpha)}{k-1-\ln(\alpha)}\\
 & =\frac{v_{1}}{k-1-\ln(\alpha)}+\frac{(k-1)v_{2}}{k-1-\ln(\alpha)}-\frac{\alpha}{k-1-\ln(\alpha)}.
\end{align*}

Therefore, $\phi({\bf v};G^{*})\geq L({\bf v})$.

{\bf Case (3):} $\alpha>v_{1}$.

Since as mentioned above, $G^{*}$ only has mass at $p=0$ within
the interval $[0,\alpha]$, the expected revenue is $\phi({\bf v};G^{*})=v_{2}G^{*}(0)$.
In other words, the expected revenue is simply the second highest
bidder's value times the probability that a sale occurs, which is
the probability that $p<\alpha$, since $v_{1}<\alpha$. Note

\begin{align*}
L({\bf v})\leq\ell(v_{1},v_{2}) & =\frac{v_{1}}{k-1-\ln(\alpha)}+\frac{(k-1)v_{2}}{k-1-\ln(\alpha)}-\frac{\alpha}{k-1-\ln(\alpha)}\\
 & \leq\frac{(k-1)v_{2}}{k-1-\ln(\alpha)},\text{ and }\\
\phi({\bf v};G^{*})=\varphi(v_{1},v_{2}) & =v_{2}G^{*}(0)=\frac{(k-1)v_{2}}{k-1-\ln(\alpha)}.
\end{align*}

Therefore, $\phi({\bf v};G^{*})\geq L({\bf v})$ for all $\mathbf{v}\in V$.

By Lemma \ref{lem:duality}, we have proven that $F_{k}^{*}$ is an
best response for nature out of distributions $F_{k}\in\mathcal{P}(V_{k},\{m_{i}\}_{i=1}^{k})$.
Hence, we have proved the optimality of $F^{*}$:

\[
\Psi(G^{*},F^{*})\leq\Psi(G^{*},F_{k}^{*})\leq\Psi(G^{*},F_{k})\leq\Psi(G^{*},F).
\]
\end{proof}

\subsection{\label{sec:Appendix:proof_fpa}Appendix: Proof of Lemma \ref{fpa-competitive}}
\begin{proof}
Without loss, fix any $F$ such that there exists a bidder say $i$
who may realize a clearly low value, say $v_{i}$. Suppose that bidder
$i$ indeed has realized the clearly low value $v_{i}$. Then, by
definition, there exists another bidder $j$ who has a strictly higher
value than $v_{i}$ with probability one.

First, consider a second-price auction. The result follows trivially
since bidders participate if and only if the realized reserve price
is weakly less than their values, and upon participation, they adopt
weak dominant strategies of bidding their values. Hence, regardless
of the realized reserve price, bidder $i$ never wins whenever she
has a clearly low value, since there is another bidder with strictly
higher value with probability one.

Next, consider a first-price auction. Given $I_{0}$, it is common
knowledge that $i$ knows that there exists a bidder $j$ who has
a strictly higher value than $v_{i}$ with probability one. and bidder
$j$ knows that $i$ has value $v_{i}$. It suffices to show that
bidder $i$ never wins under $I_{0}$. Suppose the (realized) reserve
price $r$ is strictly higher than $v_{i}$. Then, bidding above $r$
is weakly dominated. Hence, bidder $i$ must win with zero probability,
so we are done.

Hence, assume the (realized) reserve price $r$ is weakly less than
$v_{i}$. Let $\overline{b}_{i}$ and $\underline{b}_{i}$ respectively
denote the supremum and infimum of the support of bidder $i$'s equilibrium
bid. The dominance restriction means that $\overline{b}_{i}\le v_{i}$.
Next, let $\overline{b}$ and $\underline{b}$ respectively denote
the supremum and infimum of the support of $B_{-i}:=\max_{k\ne i}B_{k}$,
where $B_{k}$ is the random variable representing bidder $k$'s equilibrium
bid. (We let the equilibrium bid be 0 when a bidder does not participate.)

We next prove that $\underline{b}\ge v_{i}$. Suppose to the contrary
that $\underline{b}<v_{i}$. Then, bidder $i$ must enjoy strictly
positive surplus in equilibrium. This means either $\underline{b}_{i}>\underline{b}$,
or $\underline{b}_{i}=\underline{b}$ \textit{and} $B_{-i}$ has a
point mass at $\underline{b}$. (Otherwise, if bidder $i$ must sometimes
make a bid in equilibrium that loses with probability arbitrarily
close to one, which is inconsistent with him earning strictly positive
surplus.) We derive a contradiction in each case.
\begin{enumerate}
\item Suppose $\underline{b}_{i}>\underline{b}$: Then, there must exist
$\hat{b}\in(\underline{b},\underline{b}_{i})$ such that bidder $j$
must bid $B_{j}<\hat{b}$ with strictly positive probability, and
when he does, he loses with probability one. Note, for each type $v_{j}>v_{i}$
of bidder $j$, there is $\epsilon>0$ such that $\underline{b}_{i}+\epsilon<v_{j}$,
and for that type, deviating to any bid $\underline{b}_{i}+\epsilon$
would win with positive probability and positive surplus upon winning,
so there is a profitable deviation for $j$.
\item Suppose $\underline{b}_{i}=\underline{b}$ and $B_{-i}$ has a point
mass at $\underline{b}$: In this case, there are two possibilities.
Suppose first either bidder $i$ loads zero mass at $\underline{b}_{i}$
or bidder $j$ bids strictly below $\underline{b}_{i}$ with positive
probability. Either case, bidder $j$ is making a bid in the putative
equilibrium that would surely lose. One can then construct a profitable
deviation analogous to the case of (i) for bidder $j$. Suppose next
both bidder $i$ and bidder $j$ both load positive mass at $\underline{b}_{i}=\underline{b}$.
In this case, raising his bid above $\underline{b}$ but arbitrarily
close to $\underline{b}$ results in a discrete increase in winning
probability (and virtually the same positive surplus for all type
$v_{j}>v_{i}$) for bidder $j$, and thus constitutes a profitable
deviation for $j$.
\end{enumerate}
The contradictions thus obtained imply that $\underline{b}\ge v_{i}$.

We next claim that bidder $i$ never wins with positive probability.
Since $\underline{b}\ge v_{i}\ge\overline{b}_{i}$, this is possible
only when bidder $i$ bids $v_{i}$ with positive probability and
$B_{-i}$ has a point mass at $\underline{b}=v_{i}$. However, the
argument used in (ii) then produces a profitable deviation for bidder
$j$ and yields a contradiction. 
\end{proof}

\subsection{\label{sec:Appendix:proof_theorem_2}Appendix: Proof of Theorem \ref{thm:optimal_mechanism}}

\begin{proof}
We have already proven that the inequality $\mathcal{R}\left(F^{*},I_{0},S_{G^{*}}\right)\leq\mathcal{R}\left(F,I_{0},S_{G^{*}}\right)=\mathcal{R}\left(F,I,S_{G^{*}}\right)$
for all $F\in\mathcal{P}(V,\{m_{i}\}_{i=1}^{n})$ and $I\in\mathcal{I}$
in the proof of Theorem \ref{thm:saddle_point}. It suffices to prove
that $\mathcal{R}\left(F^{*},I_{0},M\right)\leq\mathcal{R}\left(F^{*},I_{0},S_{G^{*}}\right)$
for all $M\in\widehat{\mathcal{M}}$.

Take any mechanism $M\in\widehat{\mathcal{M}}$. Recall that under
the UCP information structure $I_{0}$, any mechanism $M$
has a representation as a direct mechanism by the revelation principle,
and overloading notation, we refer to this direct mechanism as $M=(V,x,t)$.
If $M$ does not induce a Bayes-Nash equilibrium in undominated strategies,
then the revenue is set to zero, and it will be strictly dominated
by the second-price auction. If $B(M,I_{0},F^{*})$ is non-empty,
then $M$ must satisfy the Bayesian individual-rationality (\ref{eq:IR-0})
and incentive-compatibility (\ref{eq:IC-0}) stated in Section \ref{sec:Model}.

We proceed to evaluate the revenue obtained by $M$ under the value
distribution $F^{*}$. First, the $n-k$ inactive bidders in $F^{*}$
will have clearly low values almost surely, since they all have a
point mass support under $F^{*}$ on their means, which are strictly
less than $\alpha$. They are never allocated the good under any competitive
mechanism $M$. Hence, we can focus solely on the revenue from the
$k$ active bidders.

Following the standard mechanism design algebra, the expected revenue
from any active bidder $i$ is as follows:

\[
\mathbb{E}_{\left(F^{*},I_{0}\right)}[t_{i}(\mathbf{v})]=E_{v_{i}}[T_{i}(v)]=\int_{\alpha}^{1}[vX_{i}(v)-U_{i}(v)]dF_{i}^{*}(v),
\]
where $F_{i}^{*}$ is the marginal distribution for bidder $i$ under
$F^{*}$, $T_{i}(v)$ is the interim payment by bidder $i$, and $X_{i}(v)$
is the interim allocation, all when bidder $i$'s valuation is $v$.

Recall that under $I_{0}$, bidders have common knowledge of the distribution
$F^{*}$ but do not gain any additional information of their opponents'
values from their signals. Hence, the interim values are solely determined
by averaging over the opponents' valuations according to $F^{*}$.
Of course, since $F^{*}$ does not exhibit independent marginal distributions
for each bidder's valuation, the interim values will depend on bidder
$i$'s realized type.

Recall that each active bidder in $F^{*}$ is chosen as the highest
bidder with probability $\theta_{i}$, with valuation distributed
according to $H$, and is a second-highest bidder with probability
$1-\theta_{i}$, with a valuation set at $\alpha$. We may write

\begin{align*}
\int_{\alpha}^{1}[vX_{i}(v)-U_{i}(v)]dF_{i}^{*}(v) & =\theta_{i}\left(\int_{\alpha}^{1}[vX_{i}(v)-U_{i}(v)]dH(v)\right)+(1-\theta_{i})\left(\alpha X_{i}(\alpha)-U_{i}(\alpha)\right)\\
 & =\theta_{i}\left(\alpha\left(X_{i}(1)-U_{i}(1)\right)+\int_{\alpha}^{1}\left[vX_{i}(v)-U_{i}(v)\right]h(v)dv\right)\\
 & +(1-\theta_{i})\left(\alpha X_{i}(\alpha)-U_{i}(\alpha)\right).
\end{align*}

Note that any active bidder with valuation equal to $\alpha$ is a
clearly-low bidder. Conditional on bidder $i$ having a valuation
of $\alpha$, some bidder $j\neq i$ must have been selected as the
highest bidder, whose valuation exceeds $\alpha$ with probability
1. Moreover, the highest bidder knows that all other active bidders
have value $\alpha$. Hence, for any competitive mechanism $M\in\widehat{\mathcal{M}}$,
$X_{i}(\alpha)=0$. This will also imply that $U_{i}(\alpha)=0$,
since such a bidder is never allocated the good and never enjoys strictly
positive surplus in equilibrium.

By the envelope theorem \citep{borgers}, $U_{i}(v)=\int_{\alpha}^{v}X_{i}(s)ds+U_{i}(\alpha)=\int_{\alpha}^{v}X_{i}(s)ds$.
We can then simplify the expression:

\begin{align*}
\int_{\alpha}^{1}\left[vX_{i}(v)-U_{i}(v)\right]h(v)dv & =\int_{\alpha}^{1}vX_{i}(v)h(v)dv-\int_{\alpha}^{1}\left(\int_{\alpha}^{v}X_{i}(s)ds\right)h(v)dv\cdot\\
 & =\int_{\alpha}^{1}vX_{i}(v)h(v)dv-\int_{\alpha}^{1}\left(\int_{v}^{1}h(s)ds\right)X_{i}(v)dv\\
 & =\int_{\alpha}^{1}\left(v-\frac{H(1_{-})-H(v)}{h(v)}\right)X_{i}(v)h(v)dv_{i}\\
 & =\int_{\alpha}^{1}\left[J(v)X_{i}(v)\right]h(v)dv
\end{align*}
where $H(1_{-})$ is the left limit of $H$ at $1$ and $J(v)$ is
the virtual value:

\[
J(v)\equiv v-\frac{H(1_{-})-H(v)}{h(v)}=v-\frac{(1-\alpha)-\frac{v-\alpha}{v}}{\frac{\alpha}{v^{2}}}=v+v^{2}-v=v^{2}.
\]

Plugging this expression back into the expected revenue, we obtain:

\begin{align*}
\mathbb{E}_{\left(F^{*},I_{0}\right)}[t_{i}(\mathbf{v})] & =\theta_{i}\left(\alpha\left(X_{i}(1)-\int_{\alpha}^{1}X_{i}(s)ds\right)+\int_{\alpha}^{1}v^{2}X_{i}(v)h(v)dv\right)+(1-\theta_{i})\left(\alpha X_{i}(\alpha)\right)\\
 & =\theta_{i}\left(\alpha X_{i}(1)-\alpha\int_{\alpha}^{1}X_{i}(s)ds+\alpha\int_{\alpha}^{1}X_{i}(v)dv\right)+(1-\theta_{i})\left(\alpha X_{i}(\alpha)\right)\\
 & =\theta_{i}\left(\alpha X_{i}(1)\right)+(1-\theta_{i})\left(\alpha X_{i}(\alpha)\right)\\
 & =\theta_{i}\left(\alpha X_{i}(1)\right).
\end{align*}

 We can now sum up the revenue from all active bidders:

\[
\mathcal{R}(F^{*},I_{0},M)=\sum_{i=1}^{k}\mathbb{E}_{\left(F^{*},I_{0}\right)}[t_{i}(\mathbf{v})]=\alpha\sum_{i=1}^{k}\theta_{i}X_{i}(1)\leq\alpha=\mathcal{R}(F^{*},I_{0},S_{G^{*}}),
\]
since $X_{i}(1)\leq1,\forall i$ and $\sum_{i=1}^{k}\theta_{i}=1$.

Hence, the optimal second-price auction obtains the optimal revenue
guarantee out of all mechanisms in $\widehat{\mathcal{M}}$.
\end{proof}

\end{document}